\numberwithin{equation}{section}
\numberwithin{figure}{section}
\theoremstyle{plain}
\newtheorem{thm}{\protect\theoremname}[section]
\theoremstyle{plain}
\newtheorem{rem}[thm]{\protect\remarkname}
\DeclareMathAlphabet{\mathcal}{OMS}{cmsy}{m}{n}
\providecommand{\remarkname}{Remark}
\providecommand{\lemmaname}{Lemma}
\providecommand{\theoremname}{Theorem}
\begin{document}

\title{On the dynamics of the boundary vorticity\\  for incompressible viscous flows}
\author{By V. Cherepanov\thanks{Mathematical Institute, University of Oxford, Oxford OX2 6GG. Email:
\protect\href{mailto:vladislav.cherepanov@maths.ox.ac.uk}{vladislav.cherepanov@maths.ox.ac.uk}},  \ \  J. Liu\thanks{Department of Mathematics and Department of Physics, Duke University, Durham, NC 27708. Email:  
\protect\href{mailto:jian-guo.liu@duke.edu}{jian-guo.liu@duke.edu}} \ \  and \ \   Z. Qian\thanks{Mathematical Institute, University of Oxford, Oxford OX2 6GG and Oxford Suzhou Centre for Advanced Research, Suzhou, China. Email:
\protect\href{mailto:qianz@maths.ox.ac.uk}{qianz@maths.ox.ac.uk}}}

\maketitle
\begin{abstract}
The dynamical equation of the boundary vorticity has been obtained, which shows that the viscosity at a solid wall is doubled as
if the fluid became more viscous at the boundary.  For certain viscous flows the boundary vorticity can be determined via the dynamical equation up to
bounded errors for all time,
without the need of knowing the details of the main stream flows. We then validate
the dynamical equation by carrying out stochastic direct numerical 
simulations (i.e. the random vortex method for wall-bounded incompressible viscous flows) 
by two different means of updating the boundary vorticity, one using 
 mollifiers of the Biot-Savart singular integral kernel, another 
 using the dynamical equations. 

\medskip

\emph{Keywords}: boundary vorticity, dynamical equation, incompressible fluid flow,
stochastic integral representation, random vortex method

\medskip

\emph{MSC classifications}: 76M35, 76M23, 60H30, 65C05, 68Q10,
\end{abstract}

\section{Introduction}

When a viscous flow moves along a solid wall with large velocity, substantial molecular force takes effect among fluid particles at the boundary, and therefore vorticity is created instantly within a thin boundary layer, which in turn leads to substantial stress at the wall. The stress at the wall is indeed proportional to the vorticity created near the boundary, called the boundary vorticity for short. In many engineering applications, it is very important to understand the distribution of the stress over the boundary surface when a viscous fluid flow past a solid fluid boundary. It is important to obtain quantitative information of the stress distribution across the boundary at any instance for an unsteady viscous flow.  Information about the boundary vorticity may be gained by performing numerical computations. The finite difference method or other numerical schemes may be used for solving numerically the fluid dynamics equations or the boundary layer equations, which however require to calculate the outer layer flows as well. It is therefore not cheap to carry out numerical experiments to acquire knowledge on the distribution of the boundary vorticity in general.

In this paper we propose a different approach to the study of the boundary vorticity of an incompressible viscous fluid flow past a solid wall, motivated by the recent work on the random vortex method for wall-bounded flows (cf. \citep{QQZW2022, Qian2022})  via ordinary McKean-Vlasov type stochastic differential equations. In the random vortex methods for wall-bounded flows, the boundary stress has to be updated through iterations, and can not be assigned a priori. We instead in this work shall determine the dynamics of the boundary vorticity directly. The dynamical evolution equations for boundary vorticity for incompressible viscous fluid flows are obtained, which we believe is a new discovery. The dynamical equations of the boundary vorticity reveal several remarkable properties of incompressible viscous fluid flows at the boundary which we wish to report in this paper. It is revealed that the viscosity of the fluid flow is exactly doubled at the boundary as if the fluid became more viscous at the boundary. The dynamical equation of the boundary vorticity also demonstrates that the boundary vorticity evolves mainly linearly, in contrast to the high non-linearity of the Navier-Stokes equations. For some fluid flows, the boundary stress can be determined for all time with a bounded error, a fact which comes up a little bit surprising.

The paper is organised as the following. In Section \ref{The fluid dynamics equations for flows past a wall}, we write a formulation of the vorticity transport equation as a non-homogeneous boundary problem dependent on the boundary vorticity. The dynamical equation satisfied by the boundary vorticity is derived in Section \ref{Dynamics of the boundary vorticity}. We write the stochastic representations of the vorticity and the velocity in terms of the Taylor diffusion in Section \ref{Functional integral representations}. Using these representations, we derive and implement a numerical scheme in Section \ref{Numerical experiments} where the results of the conducted experiments are reported. 

\section{The fluid dynamics equations for flows past a wall}\label{The fluid dynamics equations for flows past a wall}

For a viscous fluid flow past a solid wall,
it is clear that the geometry of the solid wall which constrains the
fluid flow has a significant impact on the dynamics of the boundary
vortices. As a matter of fact, the dynamics of the vortex motion at
the solid wall becomes significantly complicated if the solid wall
possesses non-trivial geometry (i.e., with non-constant curvature), and therefore
the study for flows past curved surfaces will be published in a future work. In this article, we shall deal with viscous fluid flows past a flat
plate, i.e. for the case the fluid boundary has trivial geometry.

Therefore we shall consider an incompressible fluid flow constrained in
the upper half space $D=\mathbb{R}_{+}^{d}$ (where $d=2$ or $3$
in this work), the solid plate is modelled by the boundary $\partial D$ where $x_{d}=0$. Let $u=(u^{1},\ldots,u^{d})$ be the velocity
and $P$ the pressure of the fluid flow in question. Then $u(x,t)$
is a time dependent vector field in $D$. The motion of the fluid
is determined by the Navier-Stokes equations
\begin{equation}
\frac{\partial u}{\partial t}+(u\cdot\nabla)u-\nu\Delta u+\nabla P-F=0\quad\textrm{ in }D,\label{3D-Ns01}
\end{equation}
\begin{equation}
\nabla\cdot u=0\quad\textrm{ in }D,\label{3D-Ns02}
\end{equation}
together with the non-slip condition that $u(x,t)=0$ for $x\in\partial D$,
where $F=(F^{1},\ldots,F^{d})$ is the external force applied to
the fluid. The initial velocity of the flow is denoted by $u_{0}(x)$.
The pressure is a scalar dynamic variable which is however determined
by the velocity (up to a function depending only on $t$). Indeed,
by taking the divergence of both sides of the first equation (\ref{3D-Ns01}),
i.e. applying $\frac{\partial}{\partial x_{i}}$ to this equation
and summing up $i=1,\ldots, d$, one obtains
\begin{equation}
\Delta P=-\sum_{j,i=1}^{d}\frac{\partial u^{j}}{\partial x_{i}}\frac{\partial u^{i}}{\partial x_{j}}+\nabla\cdot F\quad\textrm{ in }D,\label{3D-P-01}
\end{equation}
where we have used the divergence-free condition (\ref{3D-Ns02}).
The boundary value of $P$ remains to be determined. Since $u$ obeys
the no-slip condition, so by reading the first equation (\ref{3D-Ns01})
along the boundary $\partial D$ one obtains
\begin{equation}
\left.\nabla P\right|_{\partial D}=\nu\left.\Delta u\right|_{\partial D}+\left.F\right|_{\partial D}.\label{B-P-01}
\end{equation}

Instead of working out the boundary condition for $P$, we now consider the vorticity
$\omega=\nabla\wedge u$ whose components $\omega^{j}=\varepsilon^{jki}\frac{\partial}{\partial x_{k}}u^{i}$
when $d=3$ and $\omega=\frac{\partial}{\partial x_{1}}u^{2}-\frac{\partial}{\partial x_{2}}u^{1}$ when $d=2$,
which is in fact (up to a sign) the exterior derivative of $u$. Hence
by applying the linear differential operator $\varepsilon^{jki}\frac{\partial}{\partial x_{k}}$
to both sides of (\ref{3D-Ns01}), 
we shall obtain that
\begin{equation}
\frac{\partial}{\partial t}\omega+(u\cdot\nabla)\omega-\nu\Delta\omega-(\omega\cdot\nabla)u-G=0\quad\textrm{ in }D,\label{3D-vor-01}
\end{equation}
where $G=\nabla\wedge F$ with components $G^{j}=\varepsilon^{jki}\frac{\partial}{\partial x_{k}}F^{i}$ if $d=3$; if $d=2$, then $G=\frac{\partial}{\partial x_{1}}F^{2}-\frac{\partial}{\partial x_{2}}F^{1}$
and $(\omega\cdot\nabla)u=0$ identically. 

In order to utilize the vorticity transport equation (\ref{3D-vor-01}),
we need to identify the boundary values of $\omega$, i.e. the boundary
vorticity. Since $u$ obeys the non-slip condition, so that the normal
part of $\omega$ at the boundary $\omega^{\perp}=\nabla^{\Gamma}\wedge u^{\parallel}=0$,
where $u^{\Vert}$ denotes the tangential part of $u$ at the boundary, and $\nabla^{\Gamma}$ is the gradient operator in the boundary.
For identifying the tangential part of $\omega$, we notice that the
outwards unit normal $\boldsymbol{\nu}=-\frac{\partial}{\partial x_{3}}$.
Hence
\begin{equation}
\left.\omega^{1}\right|_{\partial D}=\left.\frac{\partial u^{3}}{\partial x_{2}}-\frac{\partial u^{2}}{\partial x_{3}}\right|_{\partial D}=-\left.\frac{\partial u^{2}}{\partial x_{3}}\right|_{\partial D}=-2\left.S_{23}\right|_{\partial D}\label{vort1-bv-01}
\end{equation}
and
\begin{equation}
\left.\omega^{2}\right|_{\partial D}=\left.\frac{\partial u^{1}}{\partial x_{3}}-\frac{\partial u^{3}}{\partial x_{1}}\right|_{\partial D}=\left.\frac{\partial u^{1}}{\partial x_{3}}\right|_{\partial D}=2\left.S_{13}\right|_{\partial D},\label{vort1-bv-02}
\end{equation}
where 
\[
S_{ij}=\frac{1}{2}\left(\frac{\partial u^{i}}{\partial x_{j}}+\frac{\partial u^{j}}{\partial x_{i}}\right)
\]
is the symmetric tensor field of rate-of-strain. Observe that the
normal part of the symmetric tensor field $S=(S_{ij})$, denoted by
$S^{\perp}$ is given by
\[
S^{\perp}=\left.(S_{13},S_{23},S_{33})\right|_{\partial D}.
\]
However, $\nabla\cdot u=0$, and $S_{11}=S_{22}=0$ on $\partial D$,
hence $S_{33}=0$ too. Therefore $S^{\perp}$ can be identified with
\[
S^{\perp}=\left.(S_{13},S_{23},0)\right|_{\partial D}
\]
at the boundary. Therefore the boundary vorticity $\left.\omega\right|_{\partial D}$,
denoted by $\theta$, is identified with twice of the stress at the
boundary
\begin{equation}
\theta=\left.2(-S_{23},S_{13},0)\right|_{\partial D}.\label{A-01}
\end{equation}
Therefore the vorticity $\omega$ is evolved according to the following
non-homogeneous boundary problem: 
\begin{equation}
\begin{cases}
\frac{\partial\omega}{\partial t}+(u\cdot\nabla)\omega-\nu\Delta\omega-(\omega\cdot\nabla)u-G=0 & \textrm{ in }D,\\
\left.\omega\right|_{\partial D}-\theta=0 & \textrm{ on }\partial D.
\end{cases}\label{tan-02}
\end{equation}

Note that the boundary vorticity $\theta$ is a tensor field on $\partial D$.
\begin{rem}
The boundary vorticity $\theta$ can not be determined a priori, which
causes a major problem for numerically computing solutions to the
boundary value problem of the Navier-Stokes equations via the random
vortex method (cf. \citep{AndersonGreengard1985, Chorin 1973, CottetKoumoutsakos2000, Goodman1987, Leonard1980, Majda and Bertozzi 2002, MarchioroPulvirenti1984}).
While some authors supply instead the vorticity equations \eqref{3D-vor-01}
with the Neumann boundary condition, which is in general not correct. 
\end{rem}

\section{Dynamics of the boundary vorticity}\label{Dynamics of the boundary vorticity}

In this section we shall derive the dynamical equation of the boundary
vorticity $\theta$ which is the trace $\left.\omega\right|_{\partial D}$
of the vorticity $\omega$ at the boundary. To this end we assume
that the velocity $u(x,t)$ is at least $C^{3}$ up to the boundary
$\partial D$. Since $u$ satisfies the non-slip condition, by reading
the vorticity equation (\ref{3D-vor-01}) along $\partial D$ we therefore obtain
\begin{equation}
\frac{\partial\theta}{\partial t}-\nu\left.\Delta\omega\right|_{\partial D}-\left.(\theta\cdot\nabla)u\right|_{\partial D}-\psi=0\quad\textrm{ in }\partial D,\label{BV-e-01}
\end{equation}
where $\psi = \left.G\right|_{\partial D}$, the boundary value of $G$. Using the non-slip condition
again, we deduce that $\theta^{3}=0$ and
\begin{equation}
\left.(\theta\cdot\nabla)u\right|_{\partial D}=\left.\theta^{1}\frac{\partial u}{\partial x_{1}}\right|_{\partial D}+\left.\theta^{2}\frac{\partial u}{\partial x_{2}}\right|_{\partial D}=0.\label{BV-e02}
\end{equation}
We therefore have a very important consequence.
\begin{thm}
At the boundary, two non-linear terms appearing in the vorticity transport
equation, the non-linear convection and the non-linear vorticity
stretching, neither of them participates directly in the generation of the vorticity
at the wall. 
\end{thm}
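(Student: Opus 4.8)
The plan is to restrict each of the two nonlinear terms of the vorticity transport equation \eqref{3D-vor-01} to the boundary $\partial D$ and to verify that both vanish identically there, so that the passage from \eqref{3D-vor-01} to the boundary relation \eqref{BV-e-01} leaves behind only linear contributions. The one structural fact that drives the whole argument is that the no-slip condition is an identity $u = 0$ along the \emph{entire} flat plate $\{x_{d} = 0\}$, not merely a pointwise constraint; consequently every derivative of $u$ taken in a direction tangent to $\partial D$ also vanishes on $\partial D$.

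First I would dispose of the convection term $(u\cdot\nabla)\omega = \sum_{i=1}^{d} u^{i}\,\partial\omega/\partial x_{i}$. Each scalar coefficient $u^{i}$ is evaluated on $\partial D$, where it is zero by the no-slip condition, no matter how large the derivatives $\partial\omega/\partial x_{i}$ may be. Hence $\left.(u\cdot\nabla)\omega\right|_{\partial D} = 0$, which is precisely the step that erases this term in passing to \eqref{BV-e-01}.

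Next I would treat the vorticity-stretching term $(\omega\cdot\nabla)u = \sum_{i=1}^{d}\omega^{i}\,\partial u/\partial x_{i}$, using two facts in sequence. On the boundary $\omega$ coincides with $\theta$, whose normal component vanishes, $\theta^{d} = 0$ (this is the relation $\omega^{\perp} = \nabla^{\Gamma}\wedge u^{\parallel} = 0$ already used to obtain \eqref{A-01}); this removes the normal-derivative contribution and leaves only the tangential sum $\sum_{i=1}^{d-1}\theta^{i}\,\partial u/\partial x_{i}$, displayed in \eqref{BV-e02} for $d = 3$. I would then invoke the structural fact above: since $x_{1},\ldots,x_{d-1}$ are coordinates within the plate $\{x_{d} = 0\}$ on which $u \equiv 0$, each tangential derivative $\partial u/\partial x_{i}$ vanishes on $\partial D$, so every surviving term is zero and $\left.(\omega\cdot\nabla)u\right|_{\partial D} = 0$.

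The argument is a direct restriction computation rather than an estimate, so I do not anticipate a genuine analytic obstacle; the only point demanding care is regularity. One must assume $u$ smooth enough up to the boundary --- the excerpt stipulates $C^{3}$ --- so that the traces of the tangential derivatives of $u$, and of the second-order term $\left.\Delta\omega\right|_{\partial D}$ appearing in \eqref{BV-e-01}, are well-defined and so that restriction to $\partial D$ may legitimately be interchanged with tangential differentiation.
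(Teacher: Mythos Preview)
Your proposal is correct and follows essentially the same approach as the paper: the convection term vanishes on $\partial D$ because $u=0$ there, and the stretching term vanishes because $\theta^{d}=0$ kills the normal contribution while the no-slip identity $u\equiv 0$ on the whole plate forces the tangential derivatives $\partial u/\partial x_{i}$ ($i<d$) to vanish on $\partial D$, exactly as in \eqref{BV-e02}. Your remark on the $C^{3}$ regularity needed to take traces is also in line with the paper's standing assumption.
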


%We next derive the dynamics of the boundary vorticity 
%\[
%\theta=(\theta^{1},\theta^{2},\theta^{3})=(\left.\omega^{1}\right|_{\partial D},\left.\omega^{2}\right|_{\partial D},\left.\omega^{3}\right|_{\partial D})
%\]
%for 3D case.
We are now in a position to state our main result of the paper.
\begin{thm}
\label{thm3}Let $D=\mathbb{R}_{+}^{3}$. Then $\theta^{3}=0$, and
$\theta^{1}$ and $\theta^{2}$ evolve according to the following
dynamics:
\begin{equation}
\begin{cases}
\frac{\partial\theta^{1}}{\partial t}-2\nu\Delta_{\Gamma}\theta^{1}+\nu\frac{\partial}{\partial x_{1}}\left(\nabla^{\Gamma}\cdot\theta\right)-\nu\left.\frac{\partial^{3}}{\partial\boldsymbol{\nu}^{3}}u^{2}\right|_{\partial D}-\psi^{1}=0\\
\frac{\partial\theta^{2}}{\partial t}-2\nu\Delta_{\Gamma}\theta^{2}+\nu\frac{\partial}{\partial x_{2}}\left(\nabla^{\Gamma}\cdot\theta\right)+\nu\left.\frac{\partial^{3}}{\partial\boldsymbol{\nu}^{3}}u^{1}\right|_{\partial D}-\psi^{2}=0
\end{cases}\label{th-00}
\end{equation}
in $\partial D=\mathbb{R}^{2}$. That is
\begin{equation}
\frac{\partial\theta}{\partial t}-2\nu\Delta_{\Gamma}\theta+\nu\nabla^{\Gamma}\left(\nabla^{\Gamma}\cdot\theta\right)+\nu\star\left.\frac{\partial^{3}}{\partial\boldsymbol{\nu}^{3}}u^{\parallel}\right|_{\partial D}-\psi=0.\label{g-bvort-03}
\end{equation}
Here $\boldsymbol{\nu}$ is the normal to $\partial D$ pointing outward, i.e. $\boldsymbol{\nu}=-\frac{\partial}{\partial x_{3}}$,
$\Delta_{\Gamma}$ and $\nabla^{\Gamma}$ denote the Laplacian and
gradient operator on $\mathbb{R}^{2}$ respectively. Here $\star$ is the
Hodge star operator of $\partial D$, and $u^{\parallel}$ is the tangential extension, in this case,  $u^{\parallel}=(u^1,u^2)$.
\end{thm}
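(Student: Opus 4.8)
The plan is to start from the reduced boundary relation obtained just above, namely $\frac{\partial\theta}{\partial t}-\nu\left.\Delta\omega\right|_{\partial D}-\psi=0$, which already incorporates the vanishing of both nonlinear terms $\left.(\theta\cdot\nabla)u\right|_{\partial D}=0$ established in \eqref{BV-e02}. The identity $\theta^3=0$ is immediate from the no-slip condition as recorded in \eqref{A-01}. Thus the entire content of the theorem reduces to computing the boundary traces $\left.\Delta\omega^j\right|_{\partial D}$ for $j=1,2$ and re-expressing them through tangential operators acting on $\theta$ together with a single irreducible normal-derivative term.

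First I would split the Laplacian into its tangential and normal parts, $\Delta=\Delta_{\Gamma}+\partial_{\boldsymbol{\nu}}^{2}$ with $\partial_{\boldsymbol{\nu}}=-\partial/\partial x_{3}$, so that $\left.\Delta\omega^{j}\right|_{\partial D}=\Delta_{\Gamma}\theta^{j}+\left.\partial_{\boldsymbol{\nu}}^{2}\omega^{j}\right|_{\partial D}$; the tangential Laplacian commutes with restriction to $\partial D$ because it differentiates only in $x_{1},x_{2}$, producing at once one copy of $\Delta_{\Gamma}\theta^{j}$. The real work is the second normal derivative $\left.\partial_{\boldsymbol{\nu}}^{2}\omega^{j}\right|_{\partial D}$. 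Writing $\omega^{1}=\partial_{2}u^{3}-\partial_{3}u^{2}$ and $\omega^{2}=\partial_{3}u^{1}-\partial_{1}u^{3}$ and differentiating twice in $x_{3}$, I would exploit two facts: (i) by no-slip $u\equiv0$ on $\partial D$, so every purely tangential derivative of a boundary trace vanishes and the only surviving first normal derivatives are $\left.\partial_{3}u^{1}\right|_{\partial D}=\theta^{2}$ and $\left.\partial_{3}u^{2}\right|_{\partial D}=-\theta^{1}$; and (ii) the incompressibility constraint $\nabla\cdot u=0$, differentiated once in $x_{3}$, yields $\partial_{3}^{2}u^{3}=-\partial_{1}\partial_{3}u^{1}-\partial_{2}\partial_{3}u^{2}$, which lowers the normal-derivative order of $u^{3}$ by one.

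The decisive step is the combination of (i) and (ii). The term $\left.\partial_{2}\partial_{3}^{2}u^{3}\right|_{\partial D}$ (resp. $\left.\partial_{1}\partial_{3}^{2}u^{3}\right|_{\partial D}$) is converted by incompressibility into tangential derivatives of $\partial_{3}u^{1}$ and $\partial_{3}u^{2}$; commuting the tangential derivatives through the restriction and using the traces $\theta^{2},-\theta^{1}$ turns it into $\partial_{2}^{2}\theta^{1}-\partial_{1}\partial_{2}\theta^{2}$ (resp. $\partial_{1}^{2}\theta^{2}-\partial_{1}\partial_{2}\theta^{1}$). The only piece that cannot be reduced to boundary data is the triple normal derivative $\partial_{3}^{3}u^{2}$ (resp. $\partial_{3}^{3}u^{1}$), which in terms of $\partial_{\boldsymbol{\nu}}=-\partial_{3}$ contributes $+\left.\partial_{\boldsymbol{\nu}}^{3}u^{2}\right|_{\partial D}$ (resp. $-\left.\partial_{\boldsymbol{\nu}}^{3}u^{1}\right|_{\partial D}$) and is exactly the explicit normal-derivative term surviving in \eqref{th-00}. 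Assembling $\Delta_{\Gamma}\theta^{j}$ with these contributions, I expect $\left.\Delta\omega^{1}\right|_{\partial D}=\partial_{1}^{2}\theta^{1}+2\partial_{2}^{2}\theta^{1}-\partial_{1}\partial_{2}\theta^{2}+\left.\partial_{\boldsymbol{\nu}}^{3}u^{2}\right|_{\partial D}$ and the analogous expression for $j=2$; a short rearrangement shows this equals $2\Delta_{\Gamma}\theta^{1}-\partial_{1}(\nabla^{\Gamma}\cdot\theta)+\left.\partial_{\boldsymbol{\nu}}^{3}u^{2}\right|_{\partial D}$, which after multiplication by $-\nu$ gives precisely the first line of \eqref{th-00}. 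Notably, the doubling of the viscous coefficient in front of $\Delta_{\Gamma}$ emerges here automatically, from the extra $\partial_{2}^{2}\theta^{1}$ generated by the normal differentiation.

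The main obstacle is precisely the bookkeeping in this reduction of $\left.\partial_{\boldsymbol{\nu}}^{2}\omega^{j}\right|_{\partial D}$: one must track which derivatives are tangential (hence commute with the trace and act directly on $\theta$) and which are normal, and apply incompressibility at exactly the right order so that no second normal derivative of $u^{3}$ and no derivative of $u^{1},u^{2}$ beyond the third normal one is left behind. Finally, the compact form \eqref{g-bvort-03} follows by recognizing that the pair $(-\partial_{\boldsymbol{\nu}}^{3}u^{2},\,\partial_{\boldsymbol{\nu}}^{3}u^{1})$ is the image of $\left.\partial_{\boldsymbol{\nu}}^{3}u^{\parallel}\right|_{\partial D}$ under the planar Hodge star $\star$, and that $\bigl(\partial_{1}(\nabla^{\Gamma}\cdot\theta),\,\partial_{2}(\nabla^{\Gamma}\cdot\theta)\bigr)=\nabla^{\Gamma}(\nabla^{\Gamma}\cdot\theta)$.
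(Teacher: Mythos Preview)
Your proposal is correct and follows essentially the same route as the paper: both start from the reduced relation $\partial_{t}\theta-\nu\left.\Delta\omega\right|_{\partial D}-\psi=0$, split $\Delta=\Delta_{\Gamma}+\partial_{3}^{2}$, and then reduce $\left.\partial_{3}^{2}\omega^{j}\right|_{\partial D}$ using incompressibility to replace $\partial_{3}^{2}u^{3}$ by tangential derivatives of $\partial_{3}u^{1},\partial_{3}u^{2}$, which in turn become tangential derivatives of $\theta^{2},-\theta^{1}$ via no-slip, leaving only the irreducible $\partial_{3}^{3}u^{j}$ term. The paper organises the intermediate algebra slightly differently (first computing $\partial_{3}\omega^{j}$ and rewriting pieces back in terms of $\omega$ before restricting), but the content, the key reductions, and the emergence of the doubled coefficient $2\nu\Delta_{\Gamma}$ are identical to what you describe.
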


Before we give the derivation of the boundary vorticity dynamics,
we would like to make several comments.
\begin{rem}
The dynamical equations \eqref{th-00} imply that the kinematic viscosity
constant at the boundary is exactly doubled, as if the fluid became more `viscous'
than the fluid in the main stream. This phenomenon is actually true
for any viscous wall-bounded flow constrained by a curved solid wall.
\end{rem}

\begin{rem}
The motion equation \eqref{g-bvort-03} for the boundary vorticity
also indicates clearly how the external flow (i.e., the flow away from
the boundary) participates in the generation of the vorticity at the
boundary. More precisely, the boundary vorticity is generated, 
with the help of the initial boundary vorticity and the external boundary force
$\psi$, together with an `external' force $-\nu\star\left.\frac{\partial^{3}}{\partial\boldsymbol{\nu}^{3}}u^{\parallel}\right|_{\partial D}$
from the main stream flow exerted on the "self-dynamics" of the boundary vorticity, which is determined
by the linear heat operator
\[
\frac{\partial\theta}{\partial t}-2\nu\Delta_{\Gamma}\theta+\nu\nabla^{\Gamma}\left(\nabla^{\Gamma}\cdot\theta\right).
\]
\end{rem}

\begin{rem}
For a typical wall-bounded viscous fluid flow, in particular for turbulent
boundary layer flows, the boundary vorticity $(\theta^{1},\theta^{2})$
(which equals the normal stress at the boundary) is significant, which
is big in comparison with the typical scale of the flow. While the
`external' force inherited from the outer layer flow, which adjusts
the self-dynamics of the boundary vorticity, is proportional to the
kinematic viscosity $\nu$. Since the dynamical equation 
\[
\frac{\partial\tilde{\theta}}{\partial t}-2\nu\Delta_{\Gamma}\tilde{\theta}+\nu\nabla^{\Gamma}\left(\nabla^{\Gamma}\cdot\tilde{\theta}\right)-\psi=0
\]
subject to the same initial boundary vorticity $\tilde{\theta}=\theta$
at $t=0$, is linear, hence if $\left.\frac{\partial^{3}}{\partial\boldsymbol{\nu}^{3}}u^{\parallel}\right|_{\partial D}$
is bounded and the kinematic viscosity $\nu$ is small, then the boundary
vorticity $\theta(x,t)$ is more or less self-organised, and the outer layer
flow inserts insignificant impact on the generation of the boundary
vorticity.
\end{rem}

\begin{proof}
{[}of Theorem \ref{thm3}{]} The proof is completely elementary.
We begin with Eq. (\ref{BV-e-01}) and we need to compute the trace
of $\Delta\omega$ at the boundary. While it is clear that
\[
\left.\Delta\omega^{i}\right|_{\partial D}=\left.\left(\frac{\partial^{2}}{\partial x_{1}^{2}}+\frac{\partial^{2}}{\partial x_{2}^{2}}\right)\omega^{i}+\frac{\partial^{2}}{\partial x_{3}^{2}}\omega^{i}\right|_{\partial D}=\Delta_{\Gamma}\theta^{i}+\left.\frac{\partial^{2}}{\partial x_{3}^{2}}\omega^{i}\right|_{\partial D}
\]
where the last term has to be computed. While
\begin{align*}
\frac{\partial}{\partial x_{3}}\omega^{1} & =\frac{\partial}{\partial x_{3}}\left(\frac{\partial u^{3}}{\partial x_{2}}-\frac{\partial u^{2}}{\partial x_{3}}\right)=\frac{\partial}{\partial x_{3}}\frac{\partial u^{3}}{\partial x_{2}}-\frac{\partial}{\partial x_{3}}\frac{\partial}{\partial x_{3}}u^{2}\\
 & =-\frac{\partial}{\partial x_{2}}\left(\frac{\partial u^{1}}{\partial x_{1}}+\frac{\partial u^{2}}{\partial x_{2}}\right)-\frac{\partial}{\partial x_{3}}\frac{\partial}{\partial x_{3}}u^{2}
\end{align*}
and therefore
\begin{align*}
\frac{\partial^{2}}{\partial x_{3}^{2}}\omega^{1} & =\frac{\partial^{2}}{\partial x_{3}^{2}}\left(\frac{\partial}{\partial x_{2}}u^{3}-\frac{\partial}{\partial x_{3}}u^{2}\right)\\
 & =\frac{\partial^{2}}{\partial x_{2}\partial x_{3}}\frac{\partial}{\partial x_{3}}u^{3}-\frac{\partial^{2}}{\partial x_{3}^{2}}\frac{\partial}{\partial x_{3}}u^{2}\\
 & =-\frac{\partial^{2}}{\partial x_{2}\partial x_{3}}\frac{\partial}{\partial x_{1}}u^{1}-\frac{\partial^{2}}{\partial x_{2}\partial x_{3}}\frac{\partial}{\partial x_{2}}u^{2}-\frac{\partial^{2}}{\partial x_{3}^{2}}\frac{\partial}{\partial x_{3}}u^{2}\\
 & =\frac{\partial^{2}}{\partial x_{2}^{2}}\omega^{1}+\frac{\partial^{2}}{\partial x_{1}^{2}}\omega^{1}-\frac{\partial^{2}}{\partial x_{1}^{2}}\omega^{1}-\frac{\partial^{2}}{\partial x_{2}\partial x_{1}}\omega^{2}-\frac{\partial^{2}}{\partial x_{3}^{2}}\frac{\partial}{\partial x_{3}}u^{2}\\
 & =\Delta_{\Gamma}\theta^{1}-\frac{\partial}{\partial x_{1}}\left(\nabla_{\Gamma}\cdot\theta\right)-\frac{\partial^{2}}{\partial x_{3}^{2}}\frac{\partial}{\partial x_{3}}u^{2}
\end{align*}
It follows that
\[
\left.\frac{\partial^{2}}{\partial x_{3}^{2}}\omega^{1}\right|_{\partial D}=-\frac{\partial^{3}}{\partial x_{3}^{3}}u^{2}+\Delta_{\Gamma}\theta^{1}-\frac{\partial}{\partial x_{1}}\left(\nabla_{\Gamma}\cdot\theta\right).
\]
Similarly
\begin{align*}
\frac{\partial^{2}}{\partial x_{3}^{2}}\omega^{2} & =\frac{\partial^{2}}{\partial x_{3}^{2}}\left(\frac{\partial}{\partial x_{3}}u^{1}-\frac{\partial}{\partial x_{1}}u^{3}\right)\\
 & =-\frac{\partial^{2}}{\partial x_{1}\partial x_{3}}\frac{\partial}{\partial x_{3}}u^{3}+\frac{\partial^{2}}{\partial x_{3}^{2}}\frac{\partial}{\partial x_{3}}u^{1}\\
 & =\frac{\partial^{2}}{\partial x_{1}\partial x_{3}}\frac{\partial}{\partial x_{1}}u^{1}+\frac{\partial^{2}}{\partial x_{1}\partial x_{3}}\frac{\partial}{\partial x_{2}}u^{2}+\frac{\partial^{2}}{\partial x_{3}^{2}}\frac{\partial}{\partial x_{3}}u^{1}\\
 & =\frac{\partial^{2}}{\partial x_{1}^{2}}\omega^{2}+\frac{\partial^{2}}{\partial x_{2}^{2}}\omega^{2}-\frac{\partial^{2}}{\partial x_{2}^{2}}\omega^{2}-\frac{\partial^{2}}{\partial x_{2}\partial x_{1}}\omega^{1}+\frac{\partial^{2}}{\partial x_{3}^{2}}\frac{\partial}{\partial x_{3}}u^{1}\\
 & =\Delta_{\Gamma}\theta^{2}-\frac{\partial}{\partial x_{2}}\left(\nabla_{\Gamma}\cdot\theta\right)+\frac{\partial^{2}}{\partial x_{3}^{2}}\frac{\partial}{\partial x_{3}}u^{1}
\end{align*}
where the second equality follows from the divergence-free: $\nabla\cdot u=0$.
Hence
\begin{equation}
\left.\frac{\partial}{\partial x_{3}}\omega^{1}\right|_{\partial D}=-\left.\frac{\partial}{\partial x_{3}}\frac{\partial}{\partial x_{3}}u^{2}\right|_{\partial D}\label{ome-01}
\end{equation}
and
\begin{equation}
\left.\frac{\partial}{\partial x_{3}}\omega^{2}\right|_{\partial D}=\left.\frac{\partial}{\partial x_{3}}\frac{\partial}{\partial x_{3}}u^{1}\right|_{\partial D}.\label{ome-02}
\end{equation}
so that
\[
\left.\frac{\partial^{2}}{\partial x_{3}^{2}}\omega^{2}\right|_{\partial D}=\frac{\partial^{3}}{\partial x_{3}^{3}}u^{1}+\Delta_{\Gamma}\theta^{1}-\frac{\partial}{\partial x_{2}}\left(\nabla_{\Gamma}\cdot\theta\right).
\]
Similarly 
\begin{align*}
\frac{\partial}{\partial x_{3}}\omega^{3} & =\frac{\partial}{\partial x_{3}}\left(\frac{\partial u^{2}}{\partial x_{1}}-\frac{\partial u^{1}}{\partial x_{2}}\right)=\frac{\partial}{\partial x_{3}}\frac{\partial u^{2}}{\partial x_{1}}-\frac{\partial}{\partial x_{3}}\frac{\partial}{\partial x_{2}}u^{1}\\
 & =\frac{\partial}{\partial x_{1}}\frac{\partial u^{2}}{\partial x_{3}}-\frac{\partial}{\partial x_{2}}\frac{\partial u^{1}}{\partial x_{3}}\\
 & =-\left(\frac{\partial}{\partial x_{1}}\omega^{1}+\frac{\partial}{\partial x_{2}}\omega^{2}\right)-\frac{\partial}{\partial x_{2}}\frac{\partial u^{3}}{\partial x_{1}}+\frac{\partial}{\partial x_{1}}\frac{\partial u^{3}}{\partial x_{2}}
\end{align*}
and
\[
\frac{\partial^{2}}{\partial x_{3}^{2}}\omega^{3}=-\frac{\partial}{\partial x_{3}}\left(\frac{\partial}{\partial x_{1}}\omega^{1}+\frac{\partial}{\partial x_{2}}\omega^{2}\right)
\]
so that
\begin{equation}
\left.\frac{\partial}{\partial x_{3}}\omega^{3}\right|_{\partial D}=-\nabla^{\Gamma}\cdot\theta=-\frac{\partial\theta^{1}}{\partial x_{1}}-\frac{\partial\theta^{2}}{\partial x_{2}}.\label{ome-03}
\end{equation}
Putting these equations together we obtain (\ref{th-00}).
\end{proof}
For convenience let us write down the evolution for two dimensional
flows for reference below. 
\begin{thm}\label{boundary vorticity thm 2d}
If $d=2$ (so that both $\omega=\frac{\partial}{\partial x_{1}}u^{2}-\frac{\partial}{\partial x_{2}}u^{1}$
and its trace $\theta$ at the boundary are scalar function), then
the boundary vorticity $\theta$ evolves according to the following
dynamical equation
\begin{equation}
\frac{\partial\theta}{\partial t}-2\nu\Delta_{\Gamma}\theta-\nu\left(\frac{\partial}{\partial\boldsymbol{\nu}}\right)^{3}u^{\Vert}-\psi=0\label{2D-evo-01}
\end{equation}
where $u^{\Vert}=u^{1}$ is the tangent component of the velocity
field $u$.
\end{thm}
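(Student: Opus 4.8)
The plan is to repeat the argument of Theorem~\ref{thm3}, which in two dimensions is considerably simpler because there is no vortex stretching and $\theta$ is a scalar. First I would read the vorticity transport equation \eqref{3D-vor-01} (recall that $(\omega\cdot\nabla)u\equiv0$ when $d=2$) along the boundary $\partial D=\{x_2=0\}$. Since $u$ obeys the no-slip condition, the convection term $(u\cdot\nabla)\omega$ vanishes on $\partial D$, leaving the reduced boundary identity
\begin{equation*}
\frac{\partial\theta}{\partial t}-\nu\left.\Delta\omega\right|_{\partial D}-\psi=0\qquad\text{on }\partial D,
\end{equation*}
exactly as in \eqref{BV-e-01}. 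Everything then comes down to computing the trace $\left.\Delta\omega\right|_{\partial D}$.

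Next I would split $\Delta=\frac{\partial^{2}}{\partial x_{1}^{2}}+\frac{\partial^{2}}{\partial x_{2}^{2}}$ into its tangential and normal parts. Because $x_1$ is tangential, differentiation in $x_1$ commutes with restriction to the boundary, so $\left.\frac{\partial^{2}}{\partial x_{1}^{2}}\omega\right|_{\partial D}=\Delta_{\Gamma}\theta$ immediately, and the only genuine computation is the normal second derivative $\left.\frac{\partial^{2}}{\partial x_{2}^{2}}\omega\right|_{\partial D}$.

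To evaluate it I would first use incompressibility $\partial_{x_1}u^1+\partial_{x_2}u^2=0$ to rewrite the first normal derivative of the vorticity as a full Laplacian of the tangential velocity, namely $\frac{\partial}{\partial x_2}\omega=-\Delta u^1$; differentiating once more gives $\frac{\partial^{2}}{\partial x_{2}^{2}}\omega=-\frac{\partial^{3}u^{1}}{\partial x_{1}^{2}\partial x_{2}}-\frac{\partial^{3}u^{1}}{\partial x_{2}^{3}}$. On the boundary the two terms are of different character. The first is tangential: commuting $\frac{\partial^{2}}{\partial x_{1}^{2}}$ with the trace and using $\left.\frac{\partial u^{1}}{\partial x_{2}}\right|_{\partial D}=-\theta$ (which follows from $\theta=\left.\omega\right|_{\partial D}$ together with $\left.\partial_{x_1}u^{2}\right|_{\partial D}=0$) yields $\left.\frac{\partial^{3}u^{1}}{\partial x_{1}^{2}\partial x_{2}}\right|_{\partial D}=-\Delta_{\Gamma}\theta$. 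The second is purely normal and, since $\boldsymbol{\nu}=-\frac{\partial}{\partial x_{2}}$, equals $-\left(\frac{\partial}{\partial\boldsymbol{\nu}}\right)^{3}u^{\parallel}$ on $\partial D$. Collecting these gives $\left.\frac{\partial^{2}}{\partial x_{2}^{2}}\omega\right|_{\partial D}=\Delta_{\Gamma}\theta+\left(\frac{\partial}{\partial\boldsymbol{\nu}}\right)^{3}u^{\parallel}$, and hence $\left.\Delta\omega\right|_{\partial D}=2\Delta_{\Gamma}\theta+\left(\frac{\partial}{\partial\boldsymbol{\nu}}\right)^{3}u^{\parallel}$. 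Substituting this into the reduced boundary identity produces \eqref{2D-evo-01}, with the factor $2$ in front of $\Delta_{\Gamma}\theta$ arising precisely because the tangential Laplacian is produced twice---once directly and once out of the normal derivative, which is the two-dimensional incarnation of the viscosity-doubling phenomenon.

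I expect no serious obstacle here. Unlike the three-dimensional case there is no coupling between vorticity components, so the term $\nu\nabla^{\Gamma}\left(\nabla^{\Gamma}\cdot\theta\right)$ of \eqref{g-bvort-03} has no scalar analogue and simply drops out. The only point requiring care is the bookkeeping of which derivatives survive the trace---specifically, justifying that $\partial_{x_1}$ commutes with restriction to $\{x_2=0\}$ and correctly converting between $\frac{\partial}{\partial x_2}$ and the outward normal derivative $\frac{\partial}{\partial\boldsymbol{\nu}}$---together with the single application of incompressibility that turns $\partial_{x_2}\omega$ into $-\Delta u^1$.
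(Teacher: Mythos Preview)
Your proposal is correct and follows essentially the same approach as the paper: read the vorticity transport equation along $\partial D$, then compute $\left.\Delta\omega\right|_{\partial D}$ using incompressibility and the no-slip condition. The only organizational difference is that you pass through the clean identity $\partial_{x_2}\omega=-\Delta u^{1}$ and then differentiate, whereas the paper expands $\partial_{x_2}^{2}\omega$ directly (obtaining $\Delta\omega=2\partial_{x_1}^{2}\omega-\partial_{x_1}^{3}u^{2}-\partial_{x_2}^{3}u^{1}$ in $D$ and only afterwards killing $\partial_{x_1}^{3}u^{2}$ on the boundary); both routes amount to the same elementary manipulation.
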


\begin{proof}
In this case we prefer use coordinates $x=(x_{1},x_{2})$. For 2D
flow, the vorticity transport equation becomes
\begin{equation}
\frac{\partial\omega}{\partial t}+(u\cdot\nabla)\omega-\nu\Delta\omega=G\quad\textrm{ in }D\label{2D-ome-01}
\end{equation}
where $\omega=\frac{\partial u^{2}}{\partial x_{1}}-\frac{\partial u^{1}}{\partial x_{2}}$,
so that
\begin{align*}
\Delta\omega & =\frac{\partial^{2}}{\partial x_{1}^{2}}\omega+\frac{\partial^{2}}{\partial x_{2}^{2}}\left(\frac{\partial u^{2}}{\partial x_{1}}-\frac{\partial u^{1}}{\partial x_{2}}\right)\\
 & =2\frac{\partial^{2}}{\partial x_{1}^{2}}\omega-\frac{\partial^{3}}{\partial x_{1}^{3}}u^{2}-\frac{\partial^{3}}{\partial x_{2}^{3}}u^{1}
\end{align*}
so that 
\[
\left.\Delta\omega\right|_{\partial D}=2\Delta_{\Gamma}\theta+\left(\frac{\partial}{\partial\boldsymbol{\nu}}\right)^{3}u^{1}
\]
and the conclusion follows immediately.
\end{proof}

\section{Functional integral representations}\label{Functional integral representations}

In the next two sections we demonstrate the use of the dynamical
equations in the stochastic direct numerical simulations of
the viscous flows within thin layers next to the fluid boundary.

We shall develop random vortex schemes for calculating
numerically solutions to the boundary problem (\ref{3D-Ns01}, \ref{3D-Ns02}) by using the dynamical equations of the boundary vorticity for
updating the boundary values of the vorticity in numerical schemes.

To exhibit our ideas clearly we deal with 2D flows only, i.e. $d=2$
and $D=\{x:x_{2}>0\}$. Since the boundary vorticity $\theta$ is
in general non-trivial, so we introduce a family of perturbations
of $\omega$ defined by $W^{\varepsilon}=\omega-\sigma_{\varepsilon}$
for every $\varepsilon>0$, given by
\begin{equation}
\sigma_{\varepsilon}(x_{1},x_{2},t)=\theta(x_{1},t)\phi(x_{2}/\varepsilon),\label{ext-01}
\end{equation}
where $\phi:[0,\infty)\rightarrow[0,1]$ is a proper cut-off function
such that $\phi(r)=1$ for $r\in[0,1/3)$ and $\phi(r)=0$ for $r\geq2/3$.
Indeed we will use the following cut-off function: 
\begin{equation}
\phi(r)=\begin{cases}
1 & \textrm{for \ensuremath{r\in[0,1/3)},}\\
\frac{1}{2}+54\left(r-\frac{1}{2}\right)^{3}-\frac{9}{2}\left(r-\frac{1}{2}\right) & \textrm{ for }r\in[1/3,2/3],\\
0 & \textrm{ for }r\geq2/3
\end{cases}\label{phi-def}
\end{equation}
Hence $-54\leq\phi''\leq54$, $-\frac{9}{2}\leq\phi'\leq0$ on $[1/3,2/3]$
and $\phi'=0$ for $r\leq1/3$ or $r\geq2/3$. In fact 
\begin{equation}
\phi'(r)=\begin{cases}
162\left(r-\frac{1}{2}\right)^{2}-\frac{9}{2} & \textrm{ for }r\in[1/3,2/3],\\
0 & \textrm{ otherwise }
\end{cases}\label{phi-1d}
\end{equation}
and 
\begin{equation}
\phi''(r)=\begin{cases}
324\left(r-\frac{1}{2}\right) & \textrm{ for }r\in[1/3,2/3],\\
0 & \textrm{ otherwise. }
\end{cases}\label{phi-2d}
\end{equation}

Then $W_{\varepsilon}$ is the solution to the following Dirichlet
boundary problem of the parabolic equation: 
\begin{equation}
\left(\frac{\partial}{\partial t}+u\cdot\nabla-\nu\Delta\right)W_{\varepsilon}-g_{\varepsilon}=0\quad\textrm{ in }D,\quad\textrm{ and }\left.W_{\varepsilon}\right|_{\partial D}=0,\label{eq:qq4}
\end{equation}
where 
\begin{align}
g_{\varepsilon}(x,t) & =G(x,t)+\frac{\nu}{\varepsilon^{2}}\phi''(x_{2}/\varepsilon)\theta(x_{1},t)-\frac{1}{\varepsilon}\phi'(x_{2}/\varepsilon)u^{2}(x,t)\theta(x_{1},t)\nonumber \\
 & +\phi(x_{2}/\varepsilon)\left(\nu\frac{\partial^{2}\theta}{\partial x_{1}^{2}}(x_{1},t)-\frac{\partial\theta}{\partial t}(x_{1},t)\right)-\phi(x_{2}/\varepsilon)u^{1}(x,t)\frac{\partial\theta}{\partial x_{1}}(x_{1},t)\label{g-xt-01}
\end{align}
for any $x=(x_{1},x_{2})$, $x_{2}>0$. The initial data for $W^{\varepsilon}$
is given by 
\begin{equation}
W_{0}^{\varepsilon}(x)=\omega_{0}(x_{1},x_{2})-\omega_{0}(x_{1},0)\phi(x_{2}/\varepsilon)\quad\textrm{ for }x\in D.\label{int-c1}
\end{equation}

We shall need the stochastic integral representation
in terms of the Taylor diffusion with velocity $u(x,t)$. To this
end, the vector field $u(x,t)$ is extended to a vector field on $\mathbb{R}^{2}$
by reflection about the line $x_{2}=0$ so that 
\[
u^{1}(x,t)=u^{1}(\overline{x},t),\quad u^{2}(x,t)=-u^{2}(\overline{x},t)
\]
for $x=(x_{1},x_{2})$ with $x_{2}>0$, here $x\mapsto\overline{x}$
is the reflection about the line that $x_{2}=0$, that is, $\overline{x}=(x_{1},-x_{2})$
for $x=(x_{1},x_{2})\in\mathbb{R}^{2}$. This extension retains the
divergence-free property, though, in distribution. That is, $\nabla\cdot u(\cdot,t)=0$
on $\mathbb{R}^{2}$ in the sense of distribution.

For each $\xi\in\mathbb{R}^{2}$, $(X_{t}^{\xi})_{t\geq0}$ is the
unique (weak) solution of the following It\^o's stochastic differential
equation 
\begin{equation}
\textrm{d}X_{t}^{\xi}=u(X_{t}^{\xi},t)\textrm{d}t+\sqrt{2\nu}\textrm{d}B_{t},\quad X_{0}^{\xi}=\xi,\label{X-sde1}
\end{equation}
where $B=(B_{t})$ is a two dimensional Brownian motion on some probability
space. Let $p(s,\xi;t,y)$ be the transition probability density function
of the diffusion $(X_{t}^{\xi})_{t\geq0}$, i.e. 
\[
p(s,x;t,y)\textrm{d}y=\mathbb{P}\left[\left.X_{t}^{\xi}\in\textrm{d}y\right|X_{s}^{\xi}=x\right]
\]
for $t>s\geq0$ and $x,y\in\mathbb{R}^{2}$ (which is independent
of $\xi$). Let $p^{D}(s,x;t,y)$ be the transition (sub-)probability
density function of the diffusion $X^{\xi}$ killed on leaving the
region $D$, where $t>s\geq0$, $x,y\in D$. Then 
\[
p^{D}(s,x;t,y)\textrm{d}y=\mathbb{P}\left[\left.1_{\left\{ \zeta(X^{\eta})>t\right\} },X_{t}^{\xi}\in\textrm{d}y\right|X_{s}^{\xi}=x\right]
\]
for any $t>s\geq0$, and
\begin{equation}
p^{D}(s,x;t,y)=p(s,x;t,y)-p(s,x;t,\bar{y})\label{f-04}
\end{equation}
for $t>s\geq0$ and $x,y\in D$, where $\zeta(\psi)=\inf\left\{ t:\psi(t)\notin D\right\} $.
Note that, since $\overline{u(x,t)}=u(\bar{x},t)$ for $x\in\mathbb{R}^{2}$
and $t\geq0$, $p(s,x;t,y)=p(s,\bar{x};t,\bar{y})$.
\begin{thm}
For every $\varepsilon>0$, it holds that 
\begin{align}
W_{\varepsilon}(y,t)= & \int_{D}\mathbb{P}\left[\left.\zeta(X^{\eta})>t\right|X_{t}^{\eta}=y\right]W_{\varepsilon}(\eta,0)p(0,\eta;t,y)\textrm{d}\eta\nonumber \\
 & +\int_{0}^{t}\int_{D}\mathbb{E}\left[\left.1_{\left\{ s>\gamma_{t}(X^{\eta})\right\} }g_{\varepsilon}(X_{s}^{\eta},s)\right|X_{t}^{\eta}=y\right]p(0,\eta;t,y)\textrm{d}\eta\textrm{d}s\label{rep-m02-1}
\end{align}
for every $t>0$ and $y\in D$, where $\gamma_{t}(\psi)=\sup\left\{ s\in(0,t):\psi(s)\notin D\right\} $
for every continuous path $\psi$. 
\end{thm}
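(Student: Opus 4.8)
\emph{Strategy.} The plan is to read \eqref{eq:qq4} as a linear parabolic Dirichlet problem and to reduce the claimed identity to Duhamel's representation, matching each term against the killed transition density. Since the reflected velocity field is divergence-free in the distributional sense (the property \eqref{3D-Ns02}, preserved by the reflection introduced just before \eqref{X-sde1}), the transport term is in divergence form and \eqref{eq:qq4} reads $\partial_t W_\varepsilon=(\nu\Delta-u\cdot\nabla)W_\varepsilon+g_\varepsilon$ in $D$, with $W_\varepsilon|_{\partial D}=0$ and initial datum \eqref{int-c1}. The operator $\nu\Delta-u\cdot\nabla$ is precisely the (divergence-form) Fokker--Planck operator of the Taylor diffusion \eqref{X-sde1}, so its Dirichlet fundamental solution is the killed density $p^{D}(s,z;t,y)$ of \eqref{f-04}. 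By variation of constants the unique bounded solution is
\[
W_\varepsilon(y,t)=\int_D p^{D}(0,\eta;t,y)\,W_\varepsilon(\eta,0)\,\mathrm{d}\eta+\int_0^t\!\!\int_D p^{D}(s,z;t,y)\,g_\varepsilon(z,s)\,\mathrm{d}z\,\mathrm{d}s,
\]
and it suffices to identify the right-hand side of \eqref{rep-m02-1} with this expression; uniqueness for the bounded linear Dirichlet problem then finishes the argument.

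\emph{Matching the two terms.} For the first term I would use only the meaning of the killed density: $\{\zeta(X^\eta)>t\}$ is the event that the path stays in $D$ on $[0,t]$, so Bayes' rule gives $\mathbb{P}\!\left[\zeta(X^\eta)>t\mid X^\eta_t=y\right]p(0,\eta;t,y)=p^{D}(0,\eta;t,y)$, reproducing the initial-data integral (the integrand vanishing automatically for $\eta\notin D$). For the source term I would condition the diffusion bridge at the intermediate time $s$. The event $\{s>\gamma_t(X^\eta)\}$ says that the last exit from $D$ before $t$ precedes $s$, i.e. the path is unconstrained on $(0,s)$ but remains in $D$ throughout $(s,t)$; applying the Markov property at $s$ and disintegrating over $z=X^\eta_s$ yields
\[
\mathbb{E}\!\left[1_{\{s>\gamma_t(X^\eta)\}}\,g_\varepsilon(X^\eta_s,s)\mid X^\eta_t=y\right]p(0,\eta;t,y)=\int_D p(0,\eta;s,z)\,g_\varepsilon(z,s)\,p^{D}(s,z;t,y)\,\mathrm{d}z.
\]
Integrating the initial point and using $\int p(0,\eta;s,z)\,\mathrm{d}\eta=1$ — the invariance of Lebesgue measure under the Fokker--Planck flow, a direct consequence of the divergence-free condition \eqref{3D-Ns02} — collapses the intermediate integral and returns $\int_0^t\!\int_D p^{D}(s,z;t,y)g_\varepsilon(z,s)\,\mathrm{d}z\,\mathrm{d}s$, completing the identification.

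\emph{Alternative route and main obstacle.} An equivalent and perhaps more transparent derivation is to run the time-reversed Taylor diffusion $\tilde X$ with drift $-u$, for which $r\mapsto W_\varepsilon(\tilde X_r,t-r)$ is a local martingale up to the exit time by It\^o's formula (the sign of the drift now matching \eqref{eq:qq4}, so that the drift and $\partial_t$ terms cancel against $g_\varepsilon$); taking expectations started at $y$, after localisation, produces the representation based at the terminal point, and the duality $\tilde p^{D}(s,y;t,z)=p^{D}(s,z;t,y)$ together with the time-reversal of bridges recasts it in the conditioned form \eqref{rep-m02-1}. Either way, the hardest part will be the rigorous bookkeeping of the last-exit decomposition and the range of the $\eta$-integration in the source term: because $\{s>\gamma_t(X^\eta)\}$ does not force $X^\eta_0\in D$, the normalisation $\int p(0,\eta;s,z)\,\mathrm{d}\eta=1$ is available only when $\eta$ ranges over all of $\mathbb{R}^2$, so establishing the identity exactly as written requires careful use of the reflection symmetry $p(s,x;t,y)=p(s,\bar x;t,\bar y)$ and of \eqref{f-04}. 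Two further technical points must be addressed: justifying the Markov/bridge disintegration and Fubini when only a weak solution of \eqref{X-sde1} and a merely distributional divergence-free field are at hand, and controlling the integrability near $\partial D$ of the source \eqref{g-xt-01}, whose mollified terms carry the factors $\varepsilon^{-2}\phi''$ and $\varepsilon^{-1}\phi'u^2$; I would absorb these using Gaussian upper bounds on $p$ and $p^{D}$.
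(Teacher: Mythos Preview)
The paper does not prove this theorem: immediately after the statement it writes ``For a proof of this representation, we refer to \citep{Qian2022} and \citep{LiQianXu2023}.'' There is thus no in-paper argument to compare against, and your proposal has to be judged on its own.

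Your overall plan---read \eqref{eq:qq4} as the Fokker--Planck equation for the Taylor diffusion \eqref{X-sde1}, write the Duhamel representation with the killed density $p^{D}$, and then match term by term---is the natural one, and your identification of the initial-data contribution via $\mathbb{P}[\zeta(X^{\eta})>t\mid X^{\eta}_{t}=y]\,p(0,\eta;t,y)=p^{D}(0,\eta;t,y)$ is correct.

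The substantive gap is exactly where you locate it, but your suggested fix does not close it. Your bridge disintegration gives
\[
\int_{D}\mathbb{E}\bigl[1_{\{s>\gamma_{t}(X^{\eta})\}}g_{\varepsilon}(X^{\eta}_{s},s)\,\big|\,X^{\eta}_{t}=y\bigr]p(0,\eta;t,y)\,\mathrm{d}\eta
=\int_{D}\Bigl(\int_{D}p(0,\eta;s,z)\,\mathrm{d}\eta\Bigr)g_{\varepsilon}(z,s)\,p^{D}(s,z;t,y)\,\mathrm{d}z,
\]
and you would need the inner bracket to equal $1$. The divergence-free invariance only yields $\int_{\mathbb{R}^{2}}p(0,\eta;s,z)\,\mathrm{d}\eta=1$, and the reflection symmetry $p(0,\eta;s,z)=p(0,\bar\eta;s,\bar z)$ merely converts the missing half into $\int_{D}p(0,\eta;s,\bar z)\,\mathrm{d}\eta$, so that
\[
\int_{D}p(0,\eta;s,z)\,\mathrm{d}\eta \;=\; 1-\int_{D}p(0,\eta;s,\bar z)\,\mathrm{d}\eta,
\]
which is strictly less than $1$ for $z$ near $\partial D$. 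Combining this with \eqref{f-04} does not produce a cancellation either, since $g_{\varepsilon}$ has no odd symmetry under $z\mapsto\bar z$. Hence the Duhamel-matching route, with the $\eta$-integration restricted to $D$ as in \eqref{rep-m02-1}, does not go through as you outline; the identity as written has to be obtained by the direct forward/last-exit construction carried out in the cited references rather than by collapsing to Duhamel. Your ``alternative route'' via the time-reversed diffusion is closer in spirit to what those references do, but as stated it, too, naturally produces an integral over all starting points rather than over $D$, so the same domain-of-integration issue must be addressed there.
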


For a proof of this representation, we refer to \citep{Qian2022}
and \citep{LiQianXu2023}. We emphasize that the previous representation
(\ref{rep-m02-1}) is different from the solution representation
in terms of the fundamental solution in that only the Taylor diffusion
starting at a fixed time $0$ is required, which therefore reduces
the computational cost substantially when numerical schemes are implemented
based on such integral representations. 

We next establish a representation for $u(x,t)$ by applying the Biot-Savart
law. To this end, we apply the following convention. For 2D vectors,
the following convention, which is consistent with the canonical identifications
with 3D vectors, will be adopted. If $a=(a_{1},a_{2})$ and $b=(b_{1},b_{2})$,
then $a\wedge b=a_{1}b_{2}-a_{2}b_{1}$ (a scalar), and if $c$ is
a scalar, then $a\wedge c=(a_{2}c,-a_{1}c)$.
\begin{thm}
The following stochastic integral representation for the velocity
holds: 
\begin{align}
u(x,t) & =\int_{D}K(y,x)\wedge\sigma_{\varepsilon}(y,t)\textrm{d}y\nonumber \\
 & +\int_{D}\mathbb{E}\left[1_{D}(X_{t}^{\xi})K(X_{t}^{\xi},x)-1_{D}(X_{t}^{\bar{\xi}})K(X_{t}^{\bar{\xi}},x)\right]\wedge W_{\varepsilon}(\xi,0)\textrm{d}\xi\nonumber \\
 & +\int_{0}^{t}\int_{D}\mathbb{E}\left[\left.1_{\left\{ s>\gamma_{t}(X^{\eta})\right\} }K(X_{t}^{\eta},x)\wedge g_{\varepsilon}(X_{s}^{\eta},s)\right|\right]\textrm{d}\eta\textrm{d}s\label{main formula}
\end{align}
for every $x\in D$, and $u(x,t)=\overline{u(\overline{x},t)}$ for
$x_{2}<0$, and $u(x,t)=0$ if $x_{2}=0$.
\end{thm}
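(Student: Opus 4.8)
The plan is to take the Biot--Savart law as the single analytic input and to let it transport the stochastic representation of $W_{\varepsilon}$ from the previous theorem. Concretely, the starting point is the identity $u(x,t)=\int_{D}K(y,x)\wedge\omega(y,t)\,\mathrm{d}y$ for $x\in D$, in which the kernel $K$ is the Biot--Savart kernel associated with the half-space and the no-penetration condition is already accounted for through the reflected extension of $u$ introduced above. Everything else is bookkeeping: substitute $\omega=\sigma_{\varepsilon}+W_{\varepsilon}$ and propagate the representation \eqref{rep-m02-1} through the integral.

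First I would split $u(x,t)=\int_{D}K(y,x)\wedge\sigma_{\varepsilon}(y,t)\,\mathrm{d}y+\int_{D}K(y,x)\wedge W_{\varepsilon}(y,t)\,\mathrm{d}y$. The first summand is exactly the first line of \eqref{main formula}, so the whole problem reduces to identifying $\int_{D}K(y,x)\wedge W_{\varepsilon}(y,t)\,\mathrm{d}y$ with the second and third lines. Into this integral I would insert \eqref{rep-m02-1} for $W_{\varepsilon}(y,t)$ and then interchange the $y$-integration with the $\eta$- and $s$-integrations and with the expectation, via Fubini. The initial-data part of \eqref{rep-m02-1} and its source part will then be handled separately.

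For the initial-data part, the key is the Bayes-type identity $\mathbb{P}[\zeta(X^{\eta})>t\mid X_{t}^{\eta}=y]\,p(0,\eta;t,y)=p^{D}(0,\eta;t,y)$, followed by the reflection formula \eqref{f-04}, namely $p^{D}(0,\eta;t,y)=p(0,\eta;t,y)-p(0,\eta;t,\bar{y})$, together with the symmetry $p(0,\eta;t,\bar{y})=p(0,\bar{\eta};t,y)$ recorded just after \eqref{f-04}. Since $p(0,\eta;t,\cdot)$ is the density of $X_{t}^{\eta}$, these turn $\int_{D}K(y,x)\,p^{D}(0,\eta;t,y)\,\mathrm{d}y$ back into $\mathbb{E}[\,1_{D}(X_{t}^{\eta})K(X_{t}^{\eta},x)-1_{D}(X_{t}^{\bar{\eta}})K(X_{t}^{\bar{\eta}},x)\,]$, which is precisely the second line of \eqref{main formula} after renaming $\eta$ to $\xi$. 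For the source part, using $K(y,x)=K(X_{t}^{\eta},x)$ on the event $\{X_{t}^{\eta}=y\}$ and pairing the conditional expectation against the density $p(0,\eta;t,y)$ before integrating over $y\in D$ simply collapses the conditioning, yielding an unconditional expectation of the form $\mathbb{E}[\,1_{D}(X_{t}^{\eta})\,1_{\{s>\gamma_{t}(X^{\eta})\}}\,K(X_{t}^{\eta},x)\wedge g_{\varepsilon}(X_{s}^{\eta},s)\,]$, which matches the third line (the trailing conditioning bar displayed there being vestigial).

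The main obstacle is twofold. Analytically, the Fubini interchange is not a formality: the Biot--Savart kernel $K(\cdot,x)$ is singular on the diagonal, so swapping orders requires local integrability of $K$ against the Gaussian-type densities $p$ and $p^{D}$ near $y=x$ together with uniform control near $\partial D$; this is where the decay of $\sigma_{\varepsilon}$, $W_{\varepsilon}$ and $g_{\varepsilon}$ and Gaussian bounds on the killed density $p^{D}$ must be invoked. Probabilistically, the delicate point is the correct handling of the reflection, ensuring that the image contributions $X_{t}^{\bar{\eta}}$ and the killed density $p^{D}$ are matched consistently and that the last-exit time $\gamma_{t}$ is treated so that the mass carried by paths which have already exited $D$ is counted exactly once. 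Once these two points are secured, the identification with the three lines of \eqref{main formula} is purely algebraic.
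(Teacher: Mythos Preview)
Your proposal is correct and follows essentially the same route as the paper: start from the Biot--Savart law $u(x,t)=\int_{D}K(y,x)\wedge\omega(y,t)\,\mathrm{d}y$, split $\omega=\sigma_{\varepsilon}+W_{\varepsilon}$, insert the representation \eqref{rep-m02-1} for $W_{\varepsilon}$, and invoke Fubini. The paper's own proof is in fact terser than yours---it names exactly those ingredients and stops---so your spelled-out use of the reflection identity \eqref{f-04} together with the symmetry $p(0,\eta;t,\bar y)=p(0,\bar\eta;t,y)$ to produce the $X_{t}^{\xi}$ versus $X_{t}^{\bar\xi}$ structure, and your collapse of the conditional expectation in the source term, are additional detail rather than a different strategy.
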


\begin{proof}
Recall that the Biot-Savart singular integral kernel for $D$ (which
is the gradient of the green function for $D$) is given by 
\begin{equation}
K(y,x)=\frac{1}{2\pi}\left(\frac{y-x}{|y-x|^{2}}-\frac{y-\overline{x}}{|y-\overline{x}|^{2}}\right)\label{eq:qq13-1}
\end{equation}
for $y\neq x$ or $\overline{x}$. Since $\nabla\cdot u=0$, $\nabla\wedge u=\omega$
and $u$ is subject to the Dirichlet boundary condition that $u(x,t)=0$
for $x\in\partial D$, hence, according to Green formula we obtain
that 
\begin{equation}\label{BSLaw}
u(x,t)=\int_{D}K(y,x)\wedge\omega(y,t)\textrm{d}y.
\end{equation}
While by definition, for every $\varepsilon>0$, $\omega=\sigma_{\varepsilon}+W_{\varepsilon}$,
the representation follows by utilising the representation (\ref{rep-m02-1})
and the Fubini theorem. 
\end{proof}

\section{Numerical experiments}\label{Numerical experiments}

In this section, we provide some numerical simulations for the representations discussed above, focusing on the two-dimensional case. Recall that, by letting $\epsilon\downarrow 0$ in \eqref{main formula} the following representation holds:
\begin{align}
u(x,t) & =\int_{D} K^{\perp}(x,\eta) \sigma_{\varepsilon}(\eta,t)\textrm{d}\eta+\int_{D}\mathbb{E}\left[K^{\perp}(x,X_{t}^{\eta})1_{\left\{ t<\zeta(X^{\eta}\circ\tau_{t})\right\} }\right]\omega_0(\eta)\textrm{d}\eta\nonumber \\
 & -\int_{D}\mathbb{E}\left[K^{\perp}(x,X_{t}^{\eta})1_{\left\{ t<\zeta(X^{\eta}\circ\tau_{t})\right\} }\right]\sigma_{\varepsilon}(\eta,t)\textrm{d}\eta\nonumber \\
 & +\int_{0}^{t}\int_{D}\mathbb{E}\left[1_{\{t-s<\zeta(X^{\eta}\circ\tau_{t})\}}K^{\perp}(x,X_{t}^{\eta})G(X_{s}^{\eta},s)\right]\textrm{d}\eta\textrm{d}s\nonumber\\
 & +\int_{0}^{t}\int_{D}\mathbb{E}\left[1_{\{t-s<\zeta(X^{\eta}\circ\tau_{t})\}}K^{\perp}(x,X_{t}^{\eta})\rho_{\varepsilon}(X_{s}^{\eta},s)\right]\textrm{d}\eta\textrm{d}s,
\end{align}
where $\rho_{\varepsilon} = g_{\varepsilon} - G$ and \begin{equation}\label{KernelPerp}
K^{\perp}=(K^2,-K^1)
\end{equation} in our notation. In the following we ease the notation by omitting the superscript in the kernel and the wedge product coming from the Biot-Savart law \eqref{BSLaw}, which is essentially equivalent to redefining the kernel as in \eqref{KernelPerp}. 

The only term in the definition of $\rho_{\varepsilon}(x,t)$, dependent on $\varepsilon$, that does not vanish in the limit is 
\begin{equation}\label{boundary term}
\frac{\nu}{\varepsilon^{2}}\phi''(x_{2}/\varepsilon)\theta(x_{1},t).
\end{equation}
Therefore, one can approximate the representation for the velocity $u$ by the following
\begin{align}
u^{i}(x,t) & \approx\int_{D}\mathbb{E}\left[1_{\left\{ t<\zeta(X^{\eta}\circ\tau_{t})\right\} }K^{i}(x,X_{t}^{\xi})\right]\omega_0(\xi)\textrm{d}\xi\nonumber\\& +\int_{0}^{t}\int_{D}\mathbb{E}\left[1_{\{t-s<\zeta(X^{\xi}\circ\tau_{t})\}}K^{i}(x,X_{t}^{\xi})G(X_{s}^{\xi},s)\right]\textrm{d}\xi\textrm{d}s\nonumber\\
 &+\frac{\nu}{\varepsilon^2}\int_{0}^{t}\int_{D}\mathbb{E}\left[1_{\{t-s<\zeta(X^{\xi}\circ\tau_{t})\}}K^{i}(x,X_{t}^{\xi})\theta(X_{s}^{\xi}, s) \phi'' (X_{s}^{\xi}/\varepsilon)\right]\textrm{d}\xi\textrm{d}s,  
\end{align}
for some small $\varepsilon$. That is, we omit the terms that do not contribute to the limit and \eqref{boundary term} is approximated by taking sufficiently small $\varepsilon$. Notice that as the support of $\phi''$ is the interval $[1/3,2/3]$, the last integration can be taken over a thin layer close to the boundary. Notice also that in the last integration we used $\theta(x,s), \phi''(x/\varepsilon)$ to denote $\theta(x_1,s), \phi''(x_2/\varepsilon)$ slightly abusing notation.

We use the idea described above to the representation \eqref{main formula}, i.e. we approximate the velocity similarly as
\begin{align}\label{velocity approx}
u(x,t) & \approx\int_{D}\mathbb{E}\left[1_{D}(X_{t}^{\xi})K(X_{t}^{\xi},x)-1_{D}(X_{t}^{\bar{\xi}})K(X_{t}^{\bar{\xi}},x)\right] \omega_0(\xi)\textrm{d}\xi\nonumber \\
 & +\int_{0}^{t}\int_{D}\mathbb{E}\left[1_{\left\{ s>\gamma_{t}(X^{\eta})\right\} }K(X_{t}^{\eta},x) G(X_{s}^{\eta},s)\right]\textrm{d}\eta\textrm{d}s\nonumber \\
  & +\frac{\nu}{\varepsilon^2}\int_{0}^{t}\int_{D}\mathbb{E}\left[1_{\left\{ s>\gamma_{t}(X^{\eta})\right\} }K(X_{t}^{\eta},x) \theta(X_{s}^{\eta}, s) \phi''(X_{s}^{\eta}/\varepsilon) \right]\textrm{d}\eta\textrm{d}s,
\end{align}
for every $x\in D$, and $u(x,t)=\overline{u(\overline{x},t)}$ for
$x_{2}<0$, and $u(x,t)=0$ if $x_{2}=0$. 

For the half-plane domain $D$, we introduce lattice points as follows. Notice that as in \eqref{velocity approx} the first integral contains processes with reflected initial positions $\bar{\xi}$, we have to add reflected lattice points for the below discretisation. 
\begin{enumerate}
    \item The thin boundary layer lattice $D_b$ is given by 
    \begin{equation}\label{PlateBLayerLattice}
        x_{b}^{i_1 i_2} = (i_1 h_1, i_2 h_2),\quad \text{ for } -N_1 \leq i_1 \leq N_1 \text{ and } -N_2 \leq i_2 \leq N_2,
    \end{equation}
    where $h_1, h_2$ are mesh sizes and $N_1, N_2$ are numbers of points. 

    \item The outer layer lattice $D_o$ is defined as
    \begin{equation}\label{PlateOLayerLattice}
        x_{o}^{i_1 i_2} = (i_1 h_0 , i_2 h_0),\quad \text{ for } -N_0 \leq i_1 \leq N_0 \text{ and } -N_2 \leq i_2 \leq N_2,
    \end{equation}
    where $h_0$ is mesh size and $N_0$ is the number of points. 
\end{enumerate}

The discretised random vortex system is described as follows. We initialise the processes $X^{i_1, i_2}_{b; t_0} = x_{b}^{i_1 i_2}$ and $X^{i_1, i_2}_{o; t_0} = x_{o}^{i_1 i_2}$ and update them for $k \geq 0$ according to
\begin{equation}\label{UpdX}
    X^{i_1, i_2}_{t_{k+1}} = X^{i_1, i_2}_{t_k} + h \Hat{u}(X^{i_1, i_2}_{t_k}, t_k) + \sqrt{2\nu} (B_{t_{k+1}}-B_{t_k}),
\end{equation}
where $t_k = kh$ for $k \geq 0$ and some fixed time mesh $h$. To ease the notation, we drop the subscripts $o$ and $b$. The processes are coupled with the drift $\Hat{u}$ which is given by
\begin{align}\label{HatURepr}
    \Hat{u}(x,t_{k+1}) &= \sum_{\substack{(i_1,i_2)\in D \\ i_2 > 0}} A_{i_1,i_2} \omega_{i_1,i_2} \mathbb{E}\left[1_{D}(X^{i_1, i_2}_{t_k})K(X^{i_1, i_2}_{t_k},x)-1_{D}(X^{i_1, -i_2}_{t_k})K(X^{i_1, -i_2}_{t_k},x)\right] \nonumber \\
    &+ \sum_{\substack{(i_1,i_2)\in D \\ i_2 > 0}} \sum_{l=0}^k A_{i_1,i_2} h G_{i_1,i_2; t_l} \mathbb{E} \left[1_{\left\{ t_l>\gamma_{t_k}(X^{i_1, i_2})\right\}} K(X^{i_1, i_2}_{t_k},x) G(X^{i_1, i_2}_{t_l},t_l) \right] \nonumber \\ 
    &+ \frac{\nu}{\varepsilon^2} \sum_{\substack{(i_1,i_2)\in D_b \\ i_2 > 0}} \sum_{l=0}^k h_1 h_2 h \mathbb{E} \left[ 1_{\left\{ t_l>\gamma_{t_k}(X^{i_1, i_2})\right\}} K(X^{i_1, i_2}_{t_k},x)\theta(X^{i_1, i_2}_{t_l}, t_l) \phi''(X^{i_1, i_2}_{t_l}/\varepsilon) \right], 
\end{align}
for $x \in D$, and $\Hat{u}(x,t)=\overline{\Hat{u}(\overline{x},t)}$ for
$x_{2}<0$, and $\Hat{u}(x,t)=0$ if $x_{2}=0$. In what follows, we unify summations over $(i_1,i_2)\in D_{o}$ and $(i_1,i_2)\in D_{b}$ writing summation over $(i_1,i_2)\in D$ with 
\begin{align}\label{a-d032}
    A_{i_1,i_2} &= h_1 h_2 \text{ or } h_0^2, \nonumber \\
    \omega_{i_1,i_2} &= \omega(x_{b}^{i_1 i_2}, 0) \text{ or } \omega(x_{o}^{i_1 i_2}, 0), \nonumber \\
    G_{i_1,i_2; t_l} &= G(x_{b}^{i_1 i_2}, t_l) \text{ or } G(x_{o}^{i_1 i_2}, t_l),
\end{align}
for boundary and outer layers. 

We conduct experiments using the following numerical scheme. To deal with expectations in the representation \eqref{HatURepr}, we drop them and run Brownian motions independent at each site $(i_1, i_2)$ in \eqref{UpdX}. Therefore, we update the diffusions $X_{t_k}^{i_1, i_2}$, starting at $x^{i_1 i_2}$ when $k = 0$, according to
\begin{equation}\label{NS1X}
    X^{i_1, i_2}_{t_{k+1}} = X^{i_1, i_2}_{t_k} + h \Hat{u}(X^{i_1, i_2}_{t_k}, t_k) + \sqrt{2\nu} (B_{t_{k+1}}^{i_1, i_2}-B_{t_k}^{i_1, i_2}),
\end{equation}
for $k \geq 0$, where $B^{i_1, i_2}$ are independent Brownian motions. The drift is given as 
    \begin{align}\label{NS1U}
    \Hat{u}(x,t_{k+1}) &= \sum_{\substack{(i_1,i_2)\in D \\ i_2 > 0}} A_{i_1,i_2} \omega_{i_1,i_2} \left(1_{D}(X^{i_1, i_2}_{t_k})K(X^{i_1, i_2}_{t_k},x)-1_{D}(X^{i_1, -i_2}_{t_k})K(X^{i_1, -i_2}_{t_k},x)\right) \nonumber \\
    &+ \sum_{\substack{(i_1,i_2)\in D \\ i_2 > 0}} A_{i_1,i_2} h K(X^{i_1, i_2}_{t_k},x) \sum_{l=0}^k 1_{\left\{ t_l>\gamma_{t_k}(X^{i_1, i_2})\right\}}G_{i_1,i_2; t_l} \nonumber \\ 
    &+ \frac{\nu}{\varepsilon^2} \sum_{\substack{(i_1,i_2)\in D_b \\ i_2 > 0}}  h_1 h_2 h K(X^{i_1, i_2}_{t_k},x) \sum_{l=0}^k 1_{\left\{ t_l>\gamma_{t_k}(X^{i_1, i_2})\right\}} \theta(X^{i_1, i_2}_{t_l}, t_l) \phi''(X^{i_1, i_2}_{t_l}/\varepsilon),
    \end{align}
    for $x \in D$ and $\Hat{u}(x,t)=\overline{\Hat{u}(\overline{x},t)}$ for
$x_{2}<0$, and $\Hat{u}(x,t)=0$ if $x_{2}=0$, with $A_{i_1,i_2}$, $ \omega_{i_1,i_2}$, $G_{i_1, i_2; t_l}$ given in \eqref{a-d032}. Notice that also in practice instead of the kernel $K$, we compute a regularised version denoted by $K_{\delta}$, e.g., $K_{\delta}(y,x) = K(y,x) \left( 1 - \exp\left(-|y-x|^2 / \delta \right) \right)$.

The above representation \eqref{NS1U} for the velocity depends on the boundary vorticity $\theta$. In \citep{CherepanovQian2023}, the derivative of \eqref{NS1U} with respect to $x$ was used to compute $\theta$ which is possible if the kernel $K$ is replaced with a mollified integral kernel of  $K_{\delta}$. Here we shall use a different approach --- recall from Theorem \ref{boundary vorticity thm 2d} that the boundary vorticity $\theta(x_1,t)$ solves the equation
\begin{equation*}
\frac{\partial}{\partial t}\theta-2 \nu \frac{\partial^2}{\partial x_1^2} \theta = \psi - \nu \left.\frac{\partial^3 u^1}{\partial x_2^3}\right|_{x_2=0},
\end{equation*}
where, as above, $\psi=\left.G\right|_{x_2=0}$. Assuming that the third order derivative term $\nu \left.\frac{\partial^3 u^1}{\partial x_2^3}\right|_{x_2=0}$ is negligibly small, we have that $\theta$ solves the inhomogeneous heat equation
\begin{equation*}
\frac{\partial}{\partial t}\theta-2 \nu \frac{\partial^2}{\partial x_1^2} \theta = \psi.
\end{equation*}
So that the solution can be written as
\begin{equation}\label{HeatEqSol}
\theta(x_1, t) = \int_{-\infty}^{+\infty} \theta(y_1,0) h(x_1, t, y_1) \textrm{d} y_1 + \int_{0}^{t} \int_{-\infty}^{+\infty} \psi(y_1,s) h(x_1, s, y_1) \textrm{d} y_1 \textrm{d} s,
\end{equation}
with the heat kernel $h$ given as
\begin{equation}\label{HeatKernel}
h(x_1, t, y_1)=\frac{1}{(8\nu \pi t)^{1/2}} \exp \left( -\frac{|x_1-y_1|^2}{8\nu t} \right)\quad\textrm{ for }x_1, y_1 \in \mathbb{R}.
\end{equation} 

Notice that the integrals in the above formula can be written in terms of the expectations with respect to the normal random variables. Indeed, one writes
\begin{equation*}
\int_{-\infty}^{+\infty} \theta(y_1,0) h(x_1, t, y_1) \textrm{d} y_1=\mathbb{E} \left[\theta(X,0)\right],\quad\text{ where }X\sim N(x_1, 4\nu t),
\end{equation*}
and
\begin{equation*}
\int_{0}^{t} \int_{-\infty}^{+\infty} \psi(y_1,s) h(x_1, s, y_1) \textrm{d} y_1 \textrm{d} s= \int_{0}^{t} \mathbb{E} \left[\psi(Y^s,s)\right] \textrm{d} s,\quad\text{ where }Y^s\sim N(x_1, 4\nu s).
\end{equation*}
This representation allows for the Monte-Carlo approximation of the solution \eqref{HeatEqSol} which gives 
\begin{equation}\label{ThetaApprox}
\theta(x_1,t_{k+1})\approx \frac{1}{N} \sum_{i=1}^{N} \theta(X_i,0) + \frac{h}{N} \sum_{i=1}^{N} \sum_{j=0}^{k} \psi(Y_i^{t_j},t_j),
\end{equation}
with $X_i$ and $Y_i^{t_j}$ drawn independently from $N(x_1, 4\nu t)$ and $N(x_1, 4\nu t_j)$, respectively.

Notice that the expressions for the boundary vorticity \eqref{ThetaApprox} and the velocity \eqref{NS1U} contain time-dependent summations. As in \citep{CherepanovQian2023}, we store the results of these summations for each index $i$ and $(i_1, i_2)$ respectively, which allows us to update the sum by computing one term per index at each time step. However, since we have indicators with the last boundary crossing times $\gamma_{t_k}(X^{i_1, i_2})$ in \eqref{NS1U}, we also keep track of the crossings for each $(i_1, i_2)$. We set the corresponding sum to zero at each step when the crossing happens, and after doing so, we continue updating the sum as before.

\textit{Experiment 1.} In this experiment, we assume that the initial velocity is of the form $u(x_1,x_2,0) = (-U_0 1_{\left\{ x_2 > 0 \right\}}, 0)$, i.e. a constant horizontal field formally satisfying the no-slip condition. This means that the vorticity is initialised as $\omega_0(x_1, x_2) = U_0 1_{\left\{ x_2 =0 \right\}}$, and the motion is affected by the external force $G = G_0 1_{\left\{ x_2 = 0 \right\}}$, which is concentrated at the boundary as well. 

For this simulation, we consider the half-plane domain within the limits of the box $-H \leq x_1 \leq H$, $0 \leq x_2 \leq H$ with $H = 6$ for the size of the domain. We also choose $H_0 = 0.1$ for the boundary layer thickness. The lattice points are given by \eqref{PlateBLayerLattice} and \eqref{PlateOLayerLattice} with $N_0=30, N_1=30, N_2=45$. Therefore, the mesh sizes are given by $h_0=\frac{H}{N_0} = 0.4, h_1=\frac{H}{N_1} = 0.2, h_2=\frac{H_0}{N_2} \approx 0.0022$. The parameter $\varepsilon$ in \eqref{NS1U} is taken to be $0.02$.

We conduct this experiment with the velocity and force constants $U_0 = 0.01, G_0 = -1.0$, and the viscosity $\nu = 0.1$. The simulation is conducted with time steps $t=0.01$ with the scheme described above, i.e. with boundary vorticity given by \eqref{ThetaApprox}, the results are presented in Figures \ref{Exp1aFigOFlow} and \ref{Exp1aFigBFlow}. In these and below plots, the streamlines are coloured by the magnitude of the velocity, and the background colour represents the vorticity value. We also use the scheme from \citep{CherepanovQian2023}, i.e. computing the vorticity as the derivative of \eqref{NS1U}, and provide the results in Figures \ref{Exp1bFigOFlow} and \ref{Exp1bFigBFlow}. The boundary vorticity $\theta$ and the third derivative term $\nu \left.\frac{\partial^3 u^1}{\partial x_2^3}\right|_{x_2=0}$ are plotted in \ref{Exp1bFigBStress} as functions of $x_1$ and $t$.

Comparing these simulations, we conclude that the scheme with the boundary vorticity $\theta$ approximated from the dynamical equation gives results similar to those when it is computed explicitly as the derivative of the velocity field. The boundary vorticity seems to affect mostly the boundary flow (as it is present in the summation over the boundary lattice), though we also observe that the produced flows exhibit similar behaviour close to the boundary. 

\textit{Experiment 2.} We initialise the vorticity as $\omega_0(x_1,x_2)=-U_0 \left(3 - \frac{x_1}{H}\right) \left(1 - x_2\right) 1_{\{0\leq x_2 \leq 1\}}$, which yields non-trivial boundary vorticity $\theta(x_1, 0)=-U_0 \left(3 - \frac{x_1}{H}\right)$ with $U_0=3.0$. The external force $G$ is taken to be identically zero. We use the same lattice points and parameters as above, but choose time steps $t=0.03$. The simulation is conducted with the vorticity computed as in \eqref{ThetaApprox}, the results of this simulation are given in Figures \eqref{Exp2FigOFlow} and \eqref{Exp2FigBFlow}.

\begin{figure}
    \centering
    \subfloat[\centering $t=0.05$]{{\includegraphics[width=.5\linewidth]{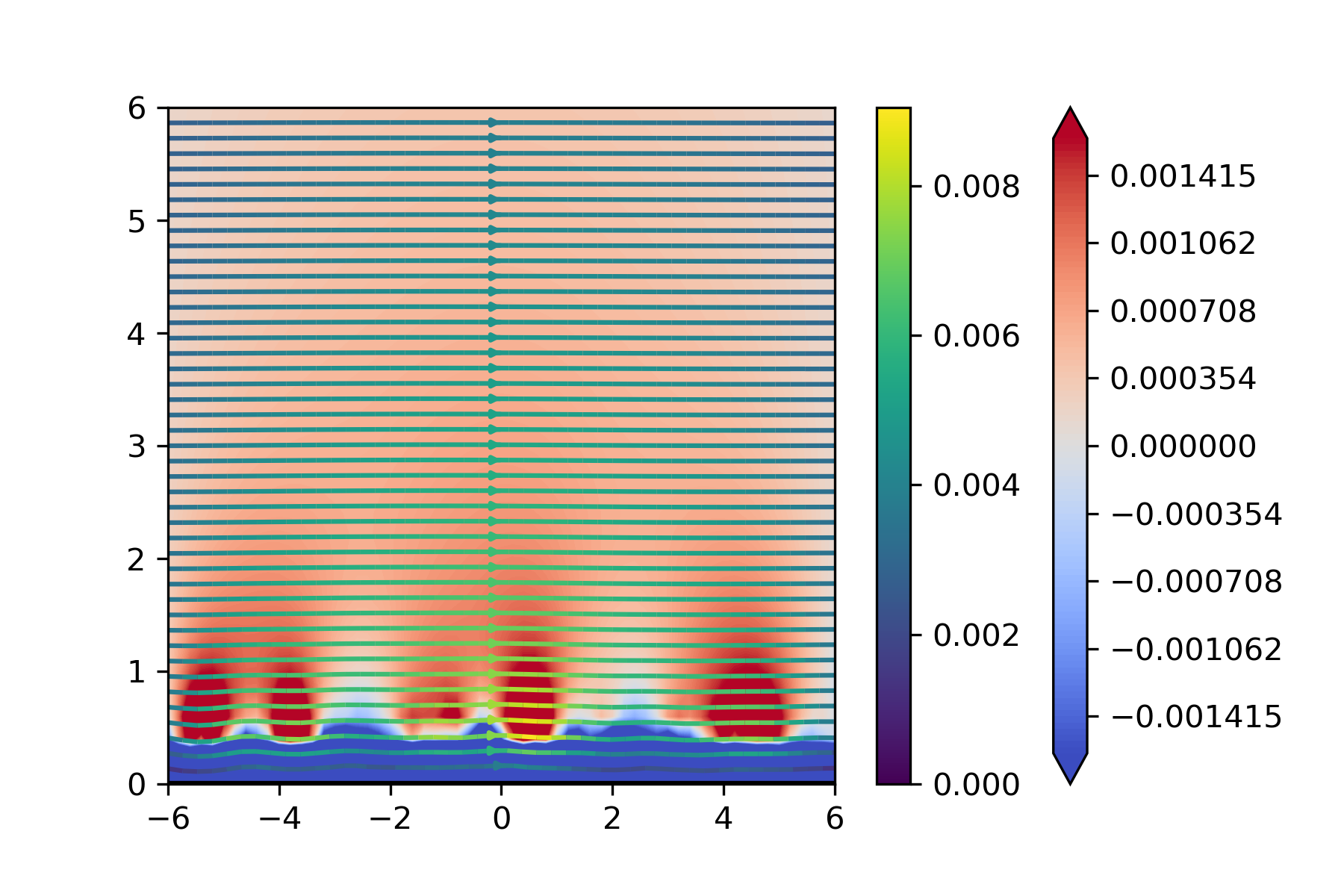} }}
    \subfloat[\centering $t=0.25$]{{\includegraphics[width=.5\linewidth]{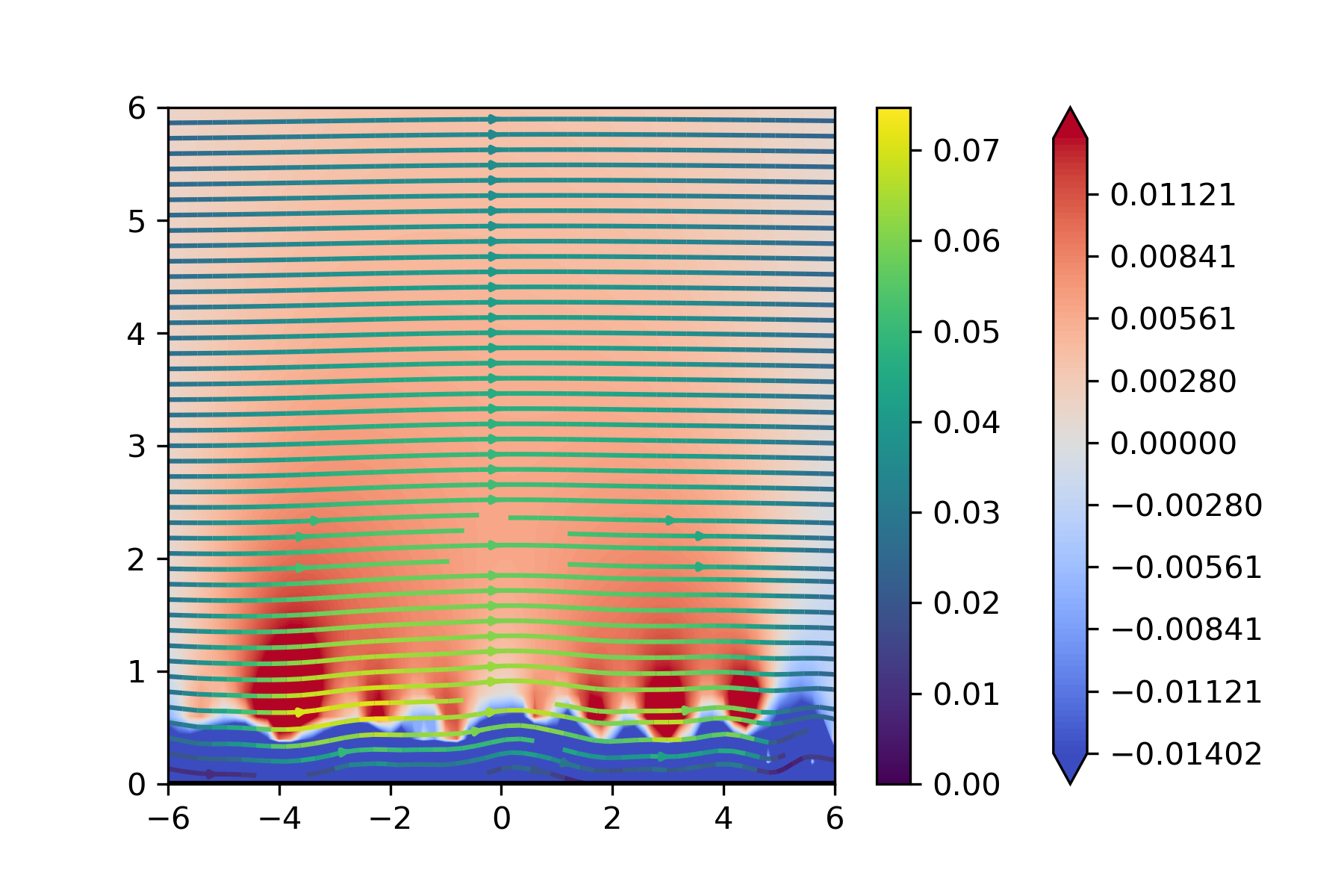}}}
    \qquad
    \subfloat[\centering $t=0.5$]{{\includegraphics[width=.5\linewidth]{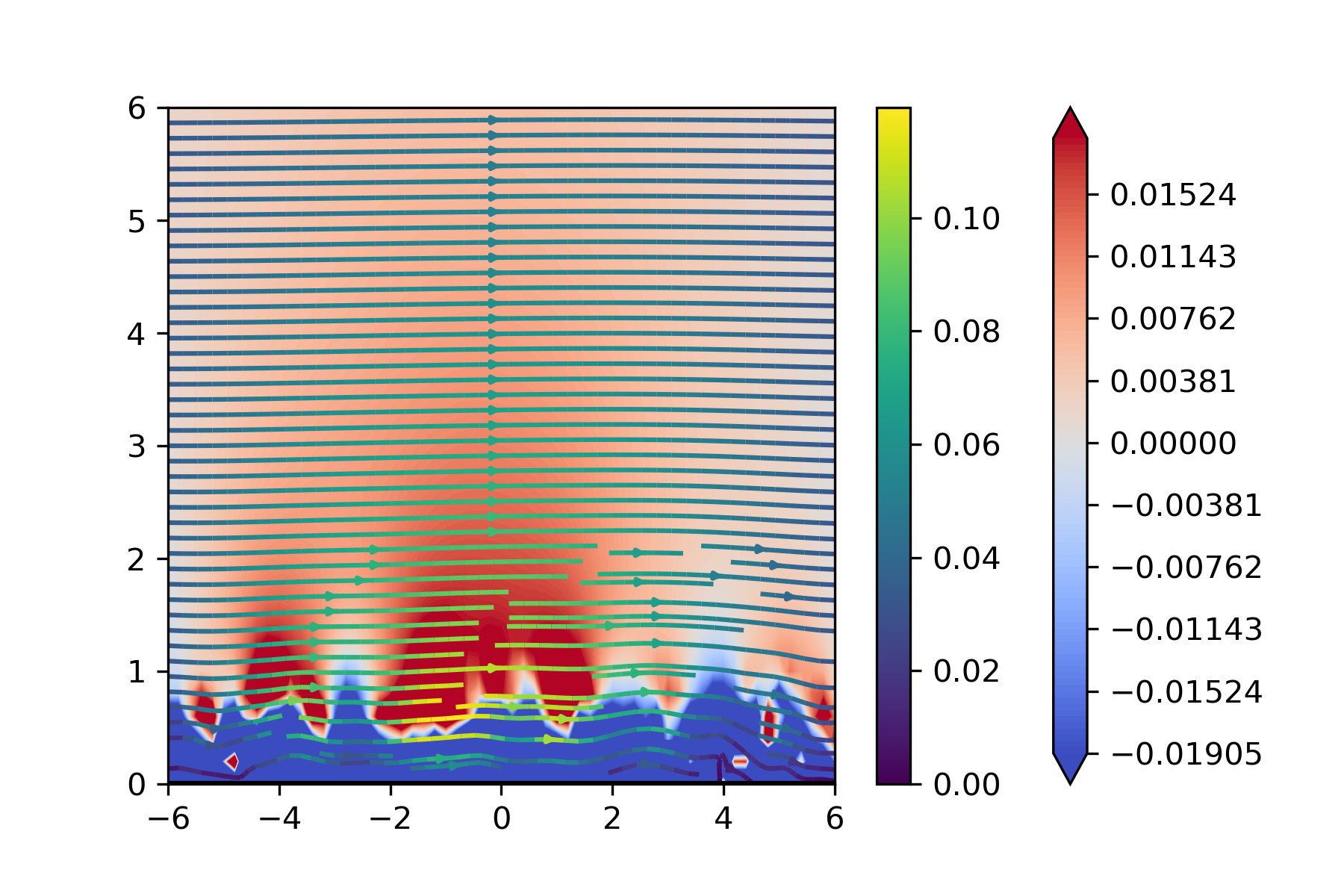} }}
    \subfloat[\centering $t=1.0$]{{\includegraphics[width=.5\linewidth]{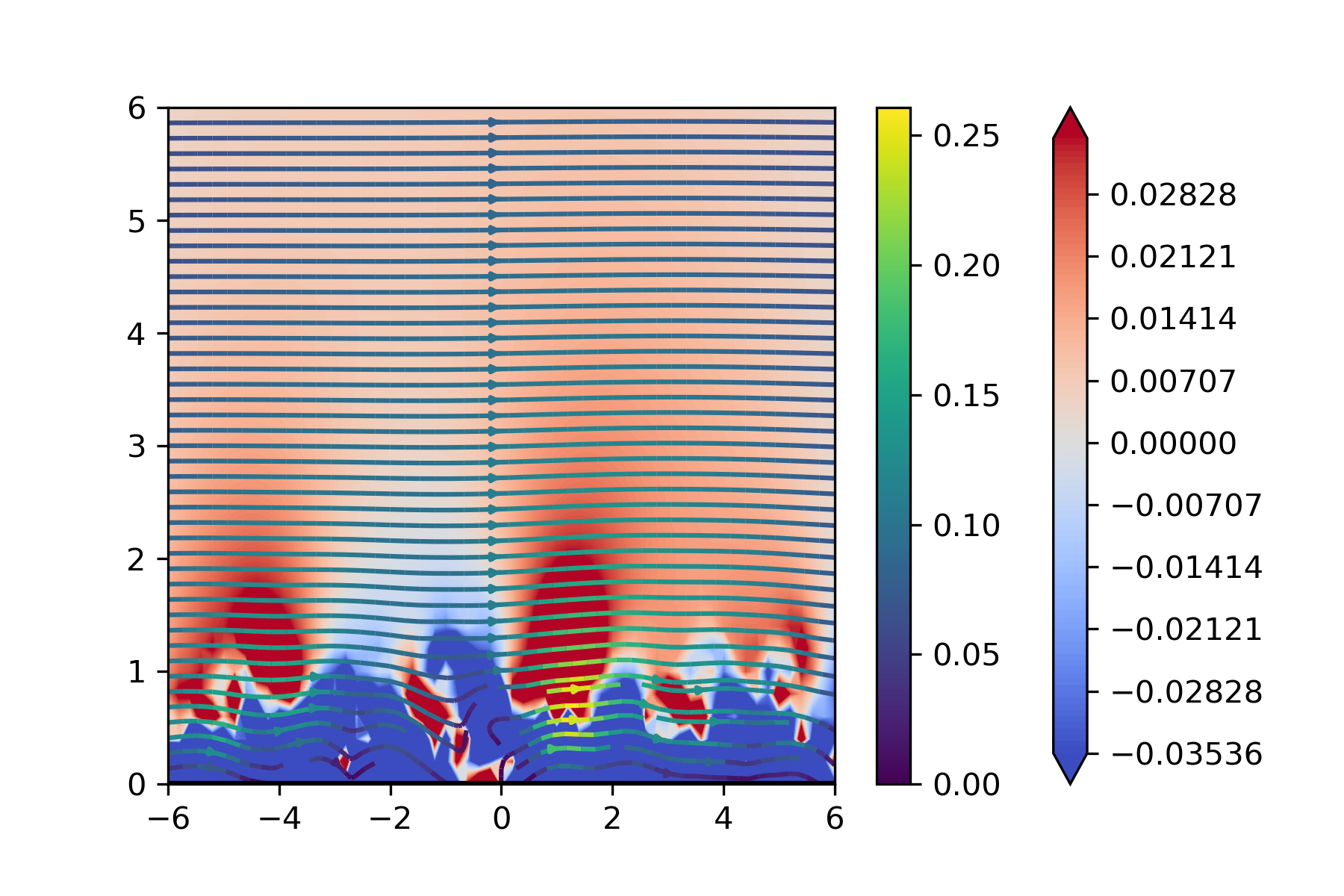}}}
    \caption{The outer layer flow at different times $t$.}
    \label{Exp1aFigOFlow}
\end{figure}

\begin{figure}
    \centering
    \subfloat[\centering $t=0.05$]{{\includegraphics[width=.5\linewidth]{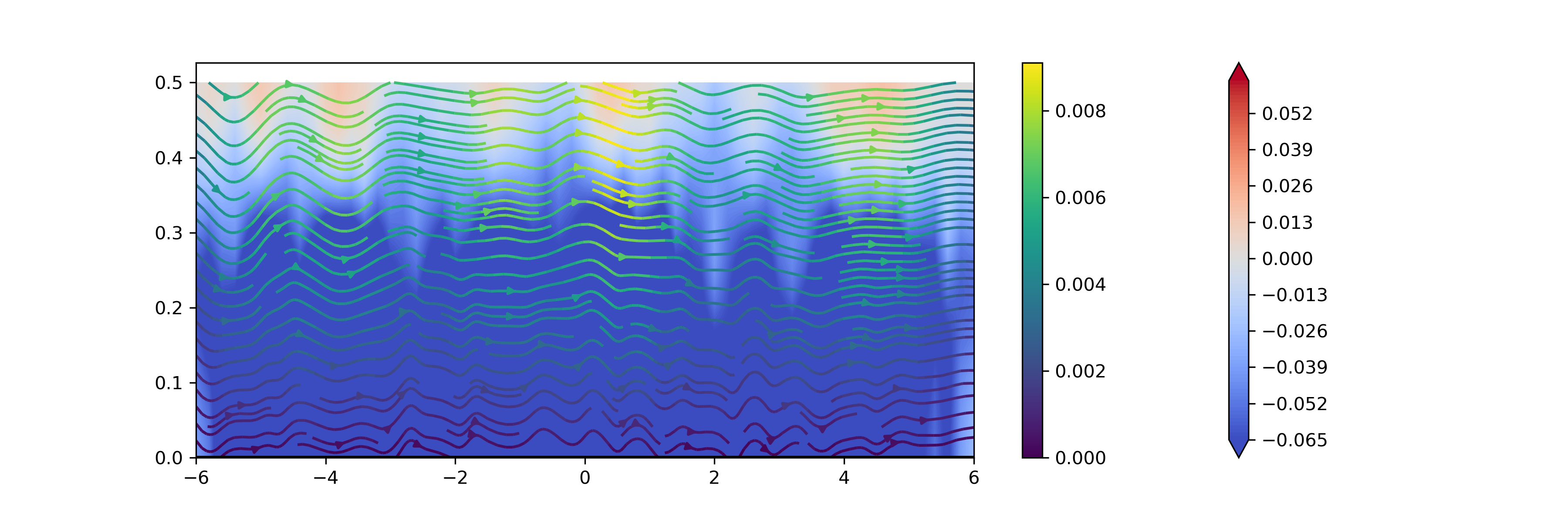} }}
    \subfloat[\centering $t=0.25$]{{\includegraphics[width=.5\linewidth]{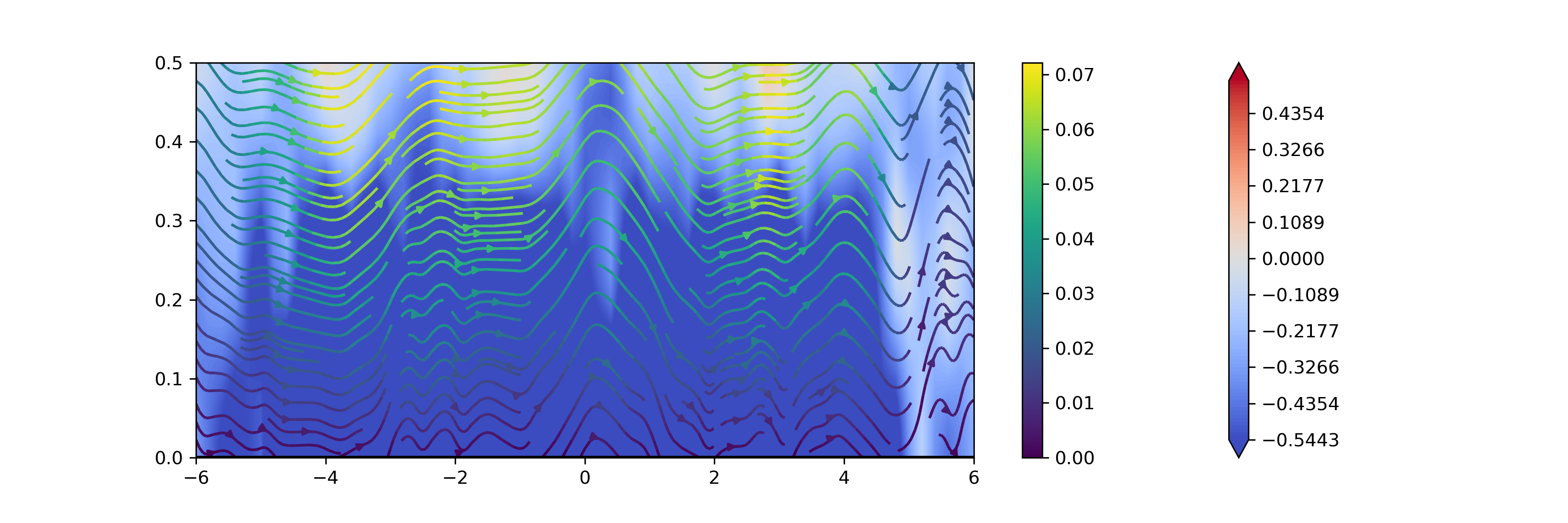}}}
    \qquad
    \subfloat[\centering $t=0.5$]{{\includegraphics[width=.5\linewidth]{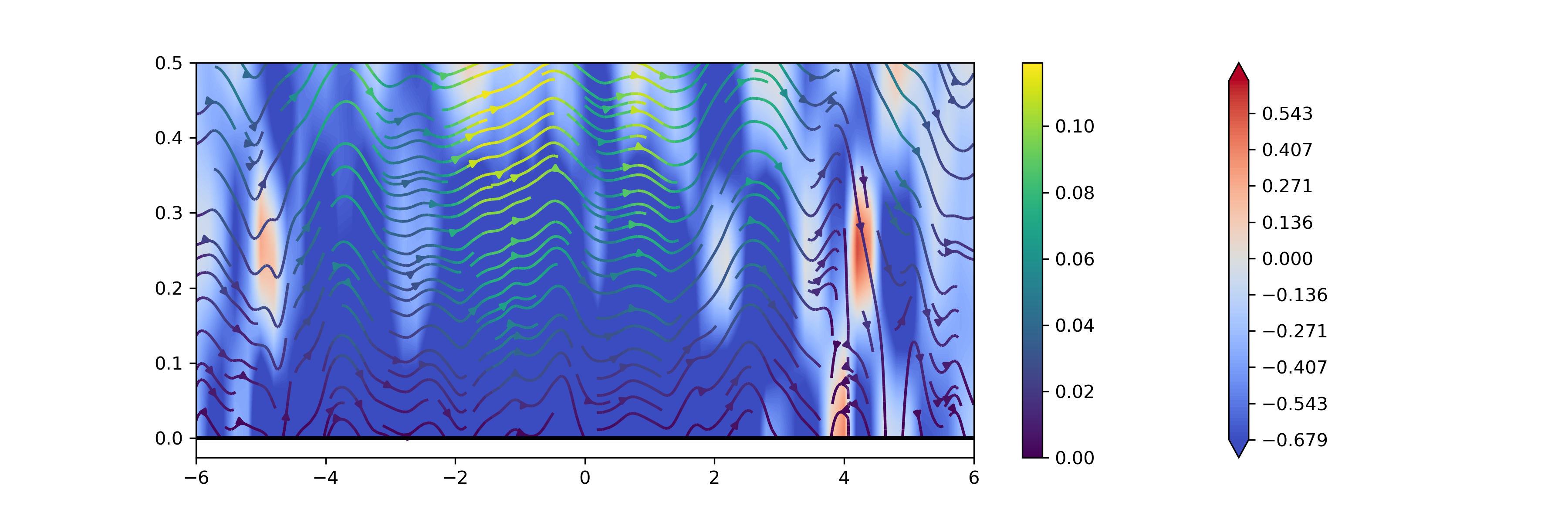} }}
    \subfloat[\centering $t=1.0$]{{\includegraphics[width=.5\linewidth]{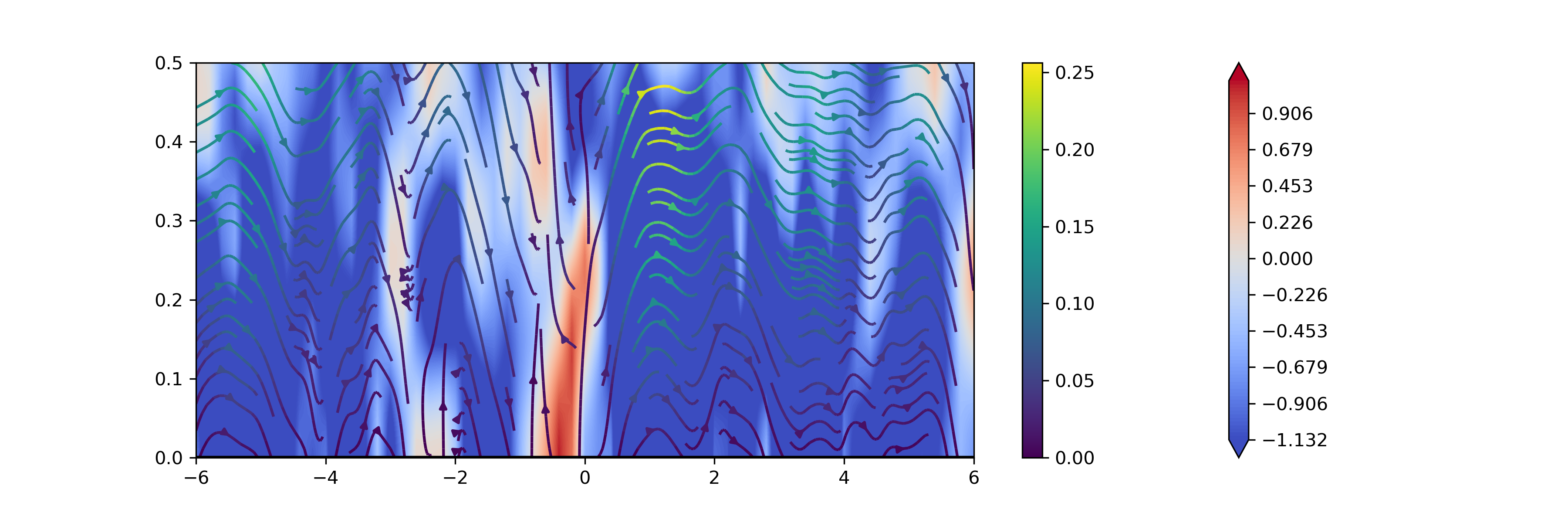}}}
    \caption{The boundary layer flow at different times $t$.}
    \label{Exp1aFigBFlow}
\end{figure}

\begin{figure}
    \centering
    \subfloat[\centering $t=0.05$]{{\includegraphics[width=.5\linewidth]{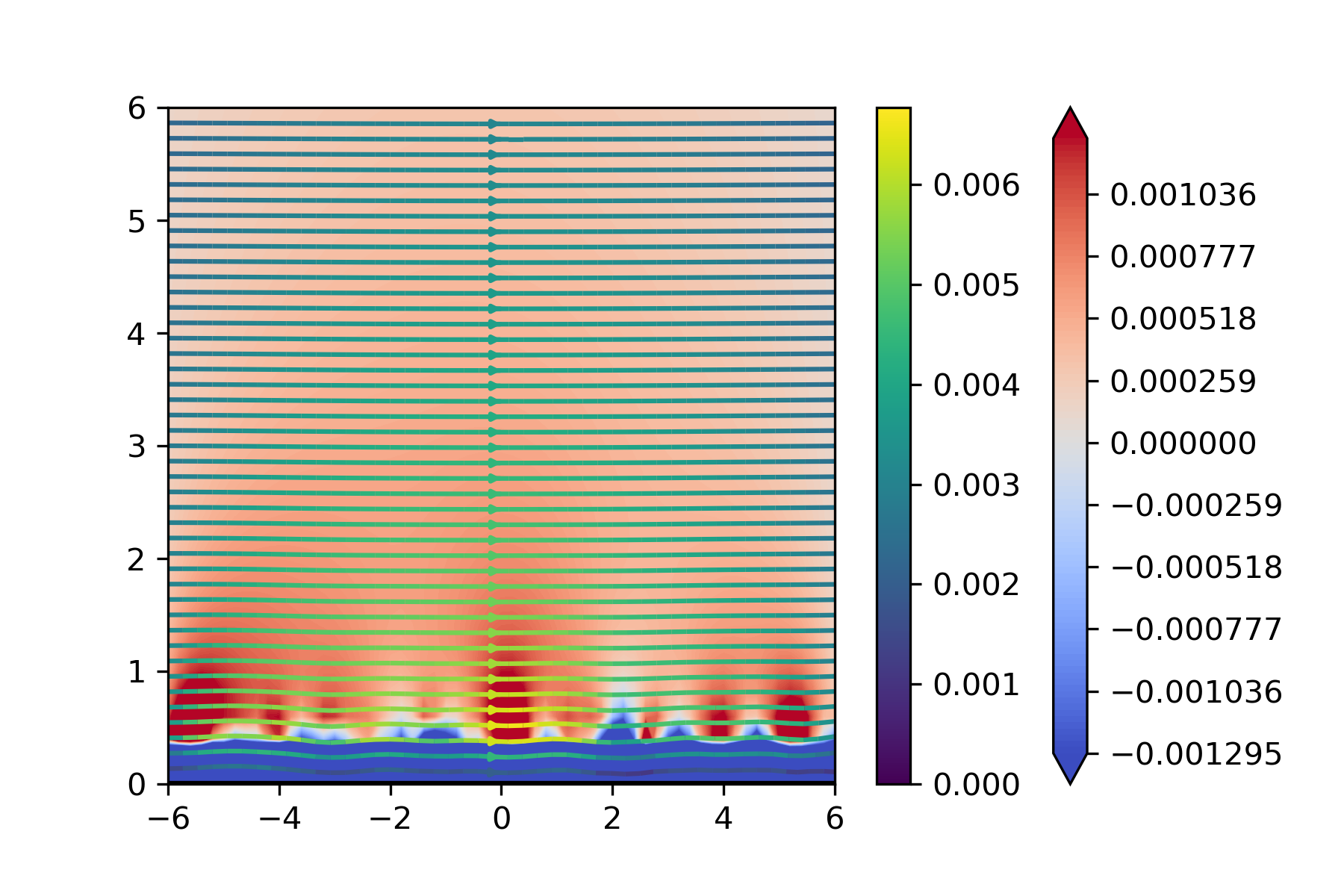} }}
    \subfloat[\centering $t=0.25$]{{\includegraphics[width=.5\linewidth]{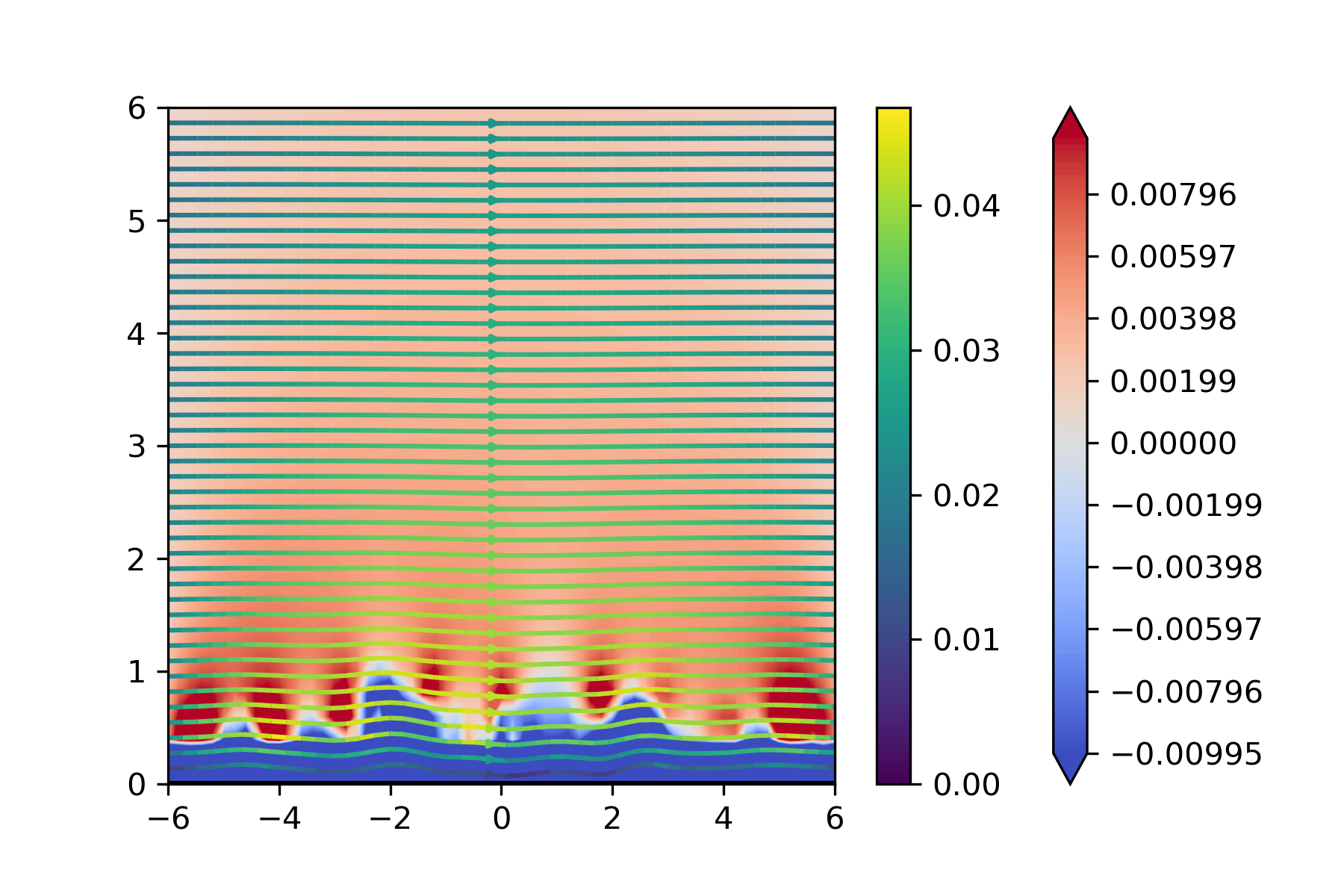}}}
    \qquad
    \subfloat[\centering $t=0.5$]{{\includegraphics[width=.5\linewidth]{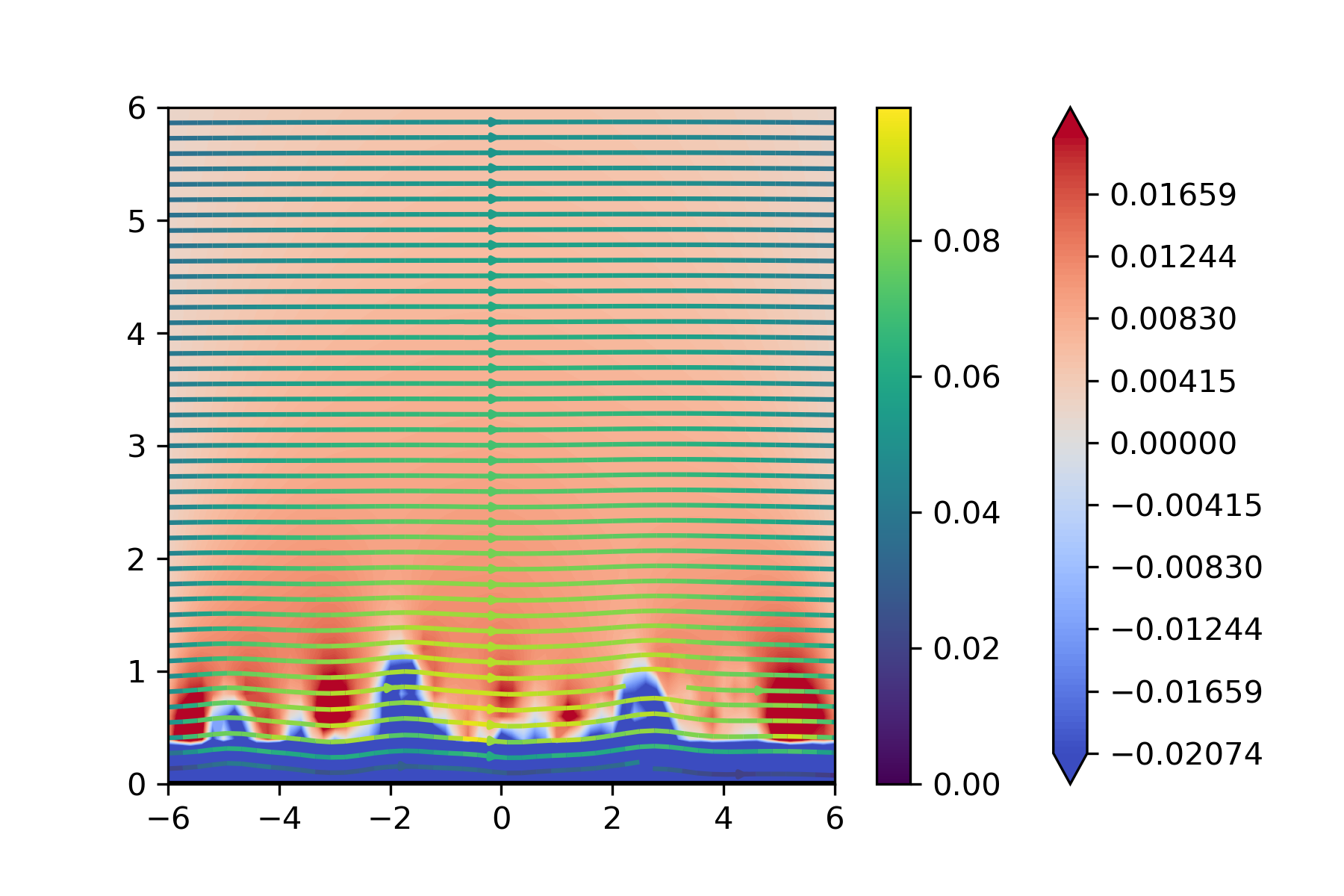} }}
    \subfloat[\centering $t=1.0$]{{\includegraphics[width=.5\linewidth]{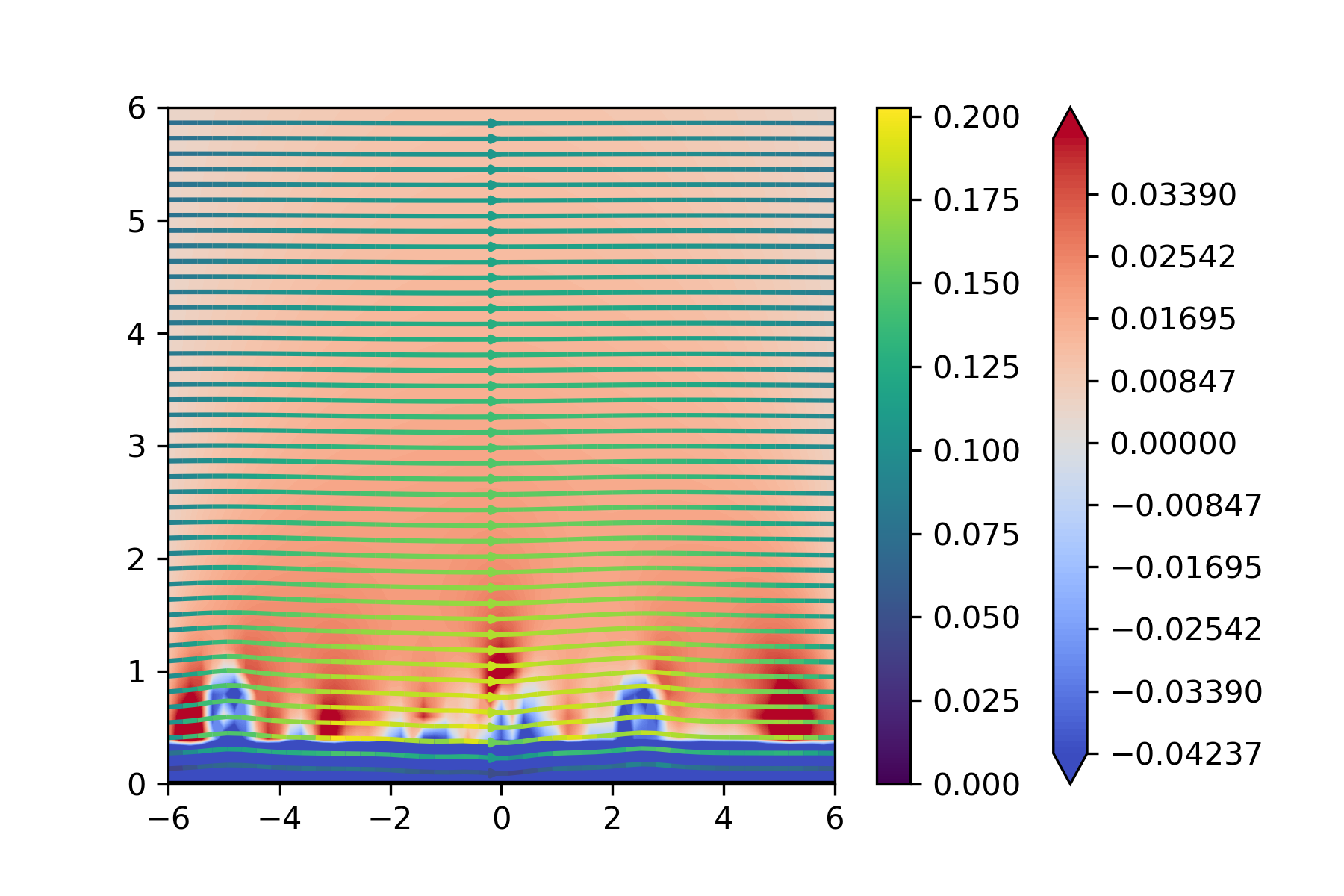}}}
    \caption{The outer layer flow at different times $t$.}
    \label{Exp1bFigOFlow}
\end{figure}

\begin{figure}
    \centering
    \subfloat[\centering $t=0.05$]{{\includegraphics[width=.5\linewidth]{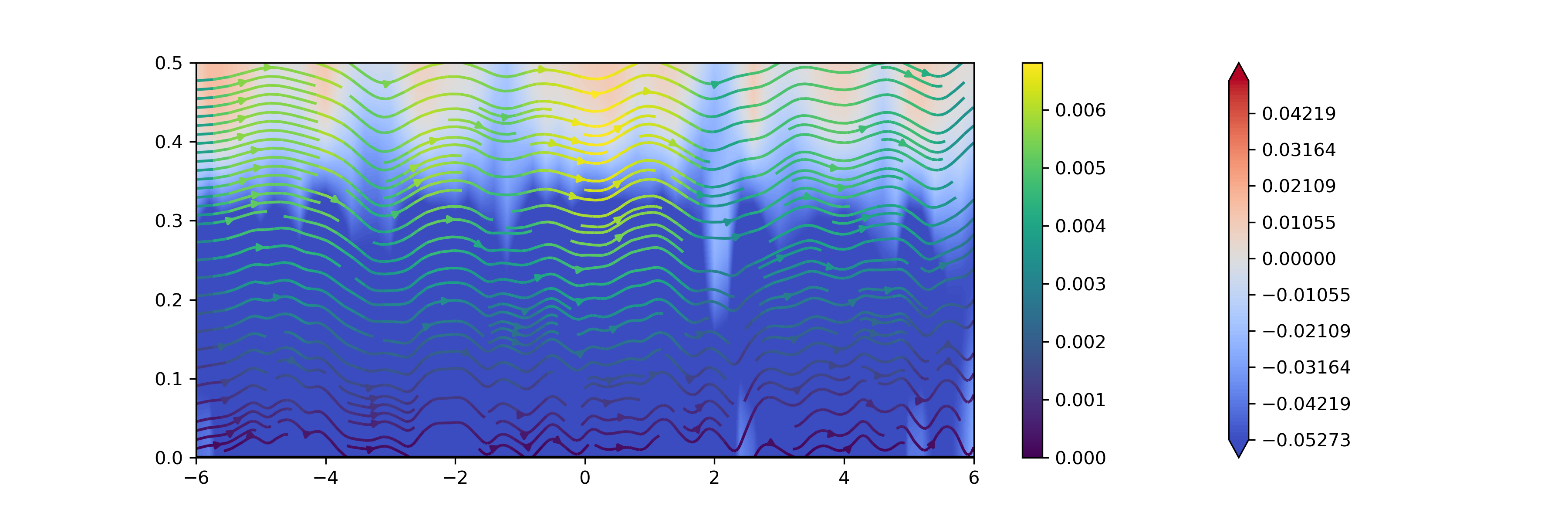} }}
    \subfloat[\centering $t=0.25$]{{\includegraphics[width=.5\linewidth]{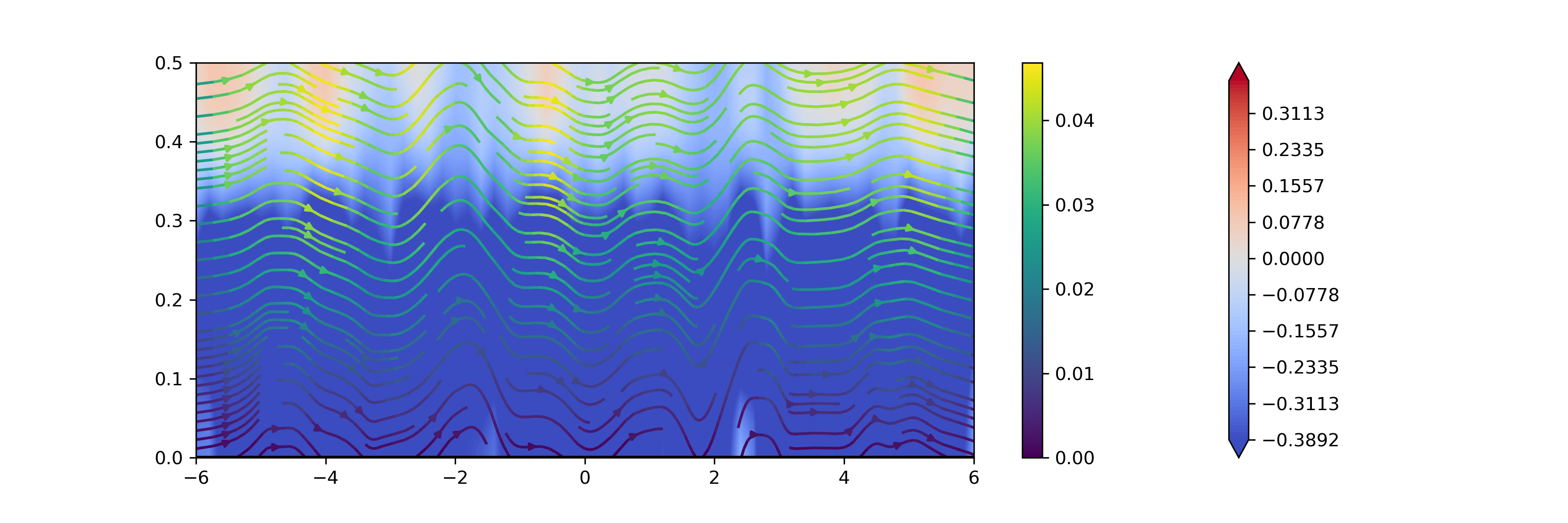}}}
    \qquad
    \subfloat[\centering $t=0.5$]{{\includegraphics[width=.5\linewidth]{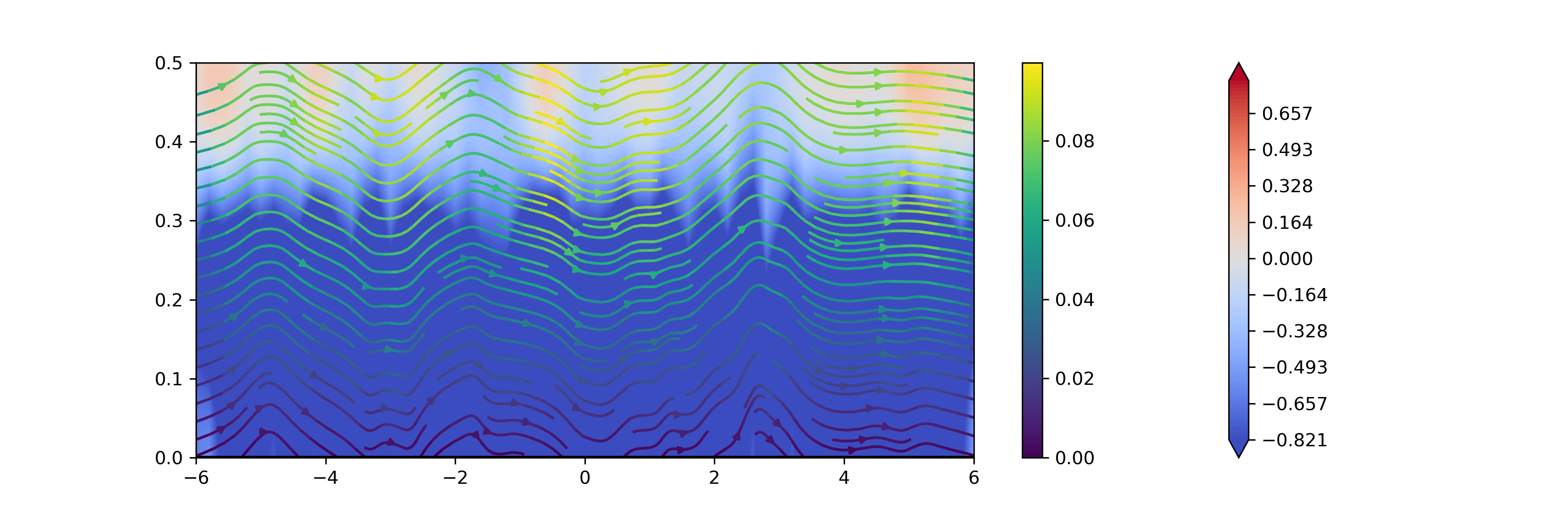} }}
    \subfloat[\centering $t=1.0$]{{\includegraphics[width=.5\linewidth]{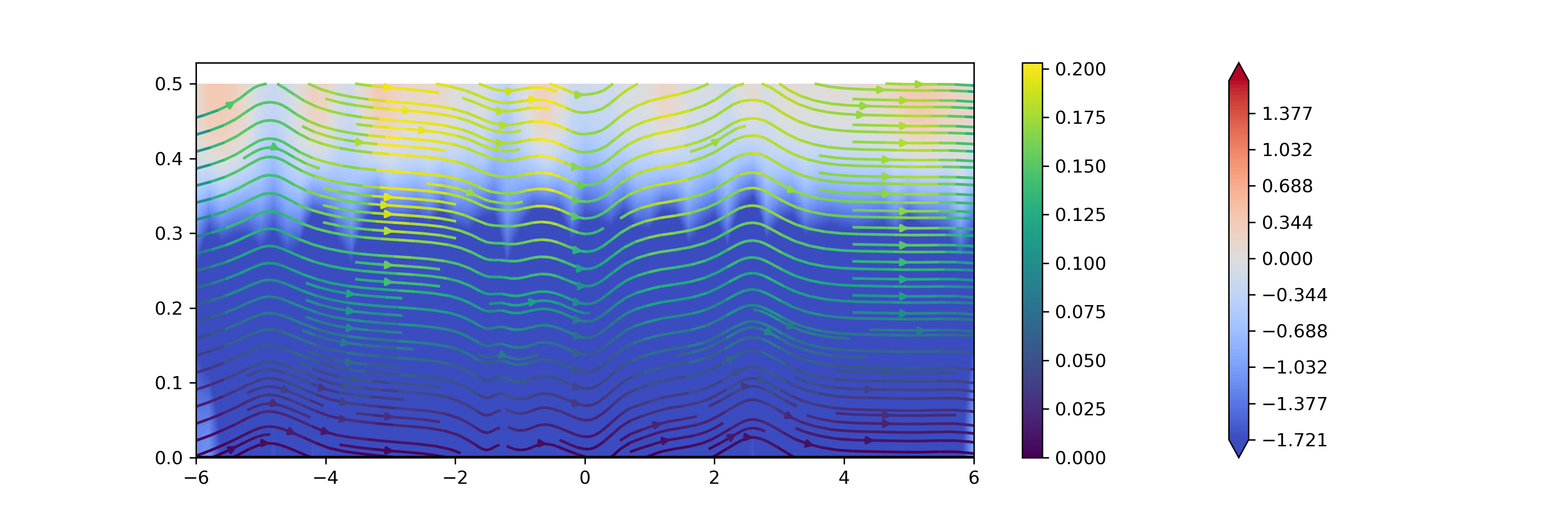}}}
    \caption{The boundary layer flow at different times $t$.}
    \label{Exp1bFigBFlow}
\end{figure}

\begin{figure}
    \centering
    \subfloat[\centering \text{Boundary vorticity} $\theta(x_1,t)$]{{\includegraphics[width=.5\linewidth]{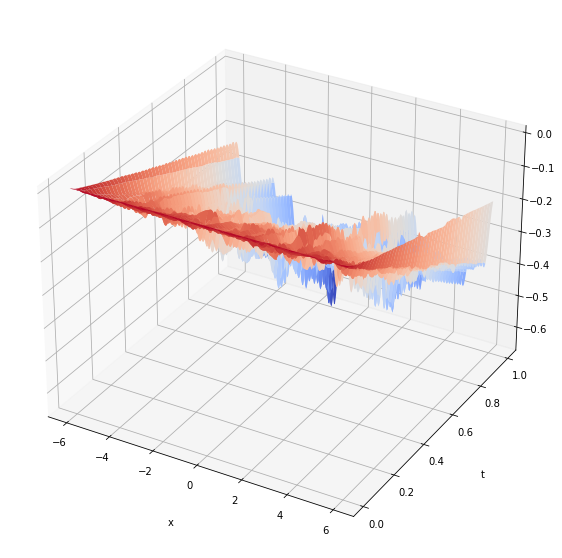}}}
    \subfloat[\centering \text{Third order derivative term} $\nu \frac{\partial^3 u^1}{\partial x_2^3} (x_1,t)$]{{\includegraphics[width=.5\linewidth]{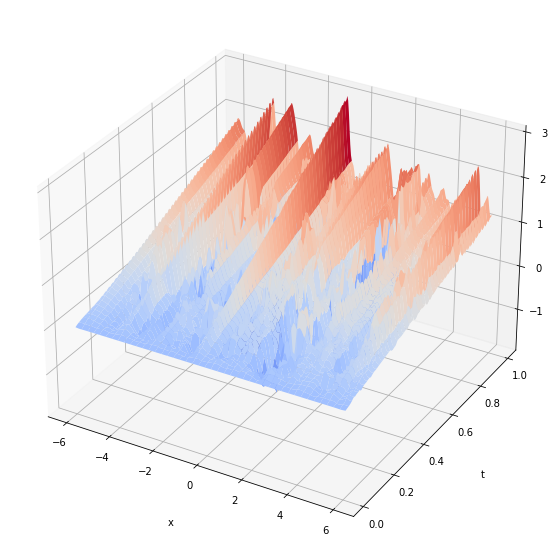} }}
    \caption{The boundary vorticity $\theta$ and the $\nu \frac{\partial^3 u^1}{\partial x_2^3}$ as functions of position at boundary $x_1$ and time $t$.}
    \label{Exp1bFigBStress}
\end{figure}

\begin{figure}
    \centering
    \subfloat[\centering $t=0.15$]{{\includegraphics[width=.5\linewidth]{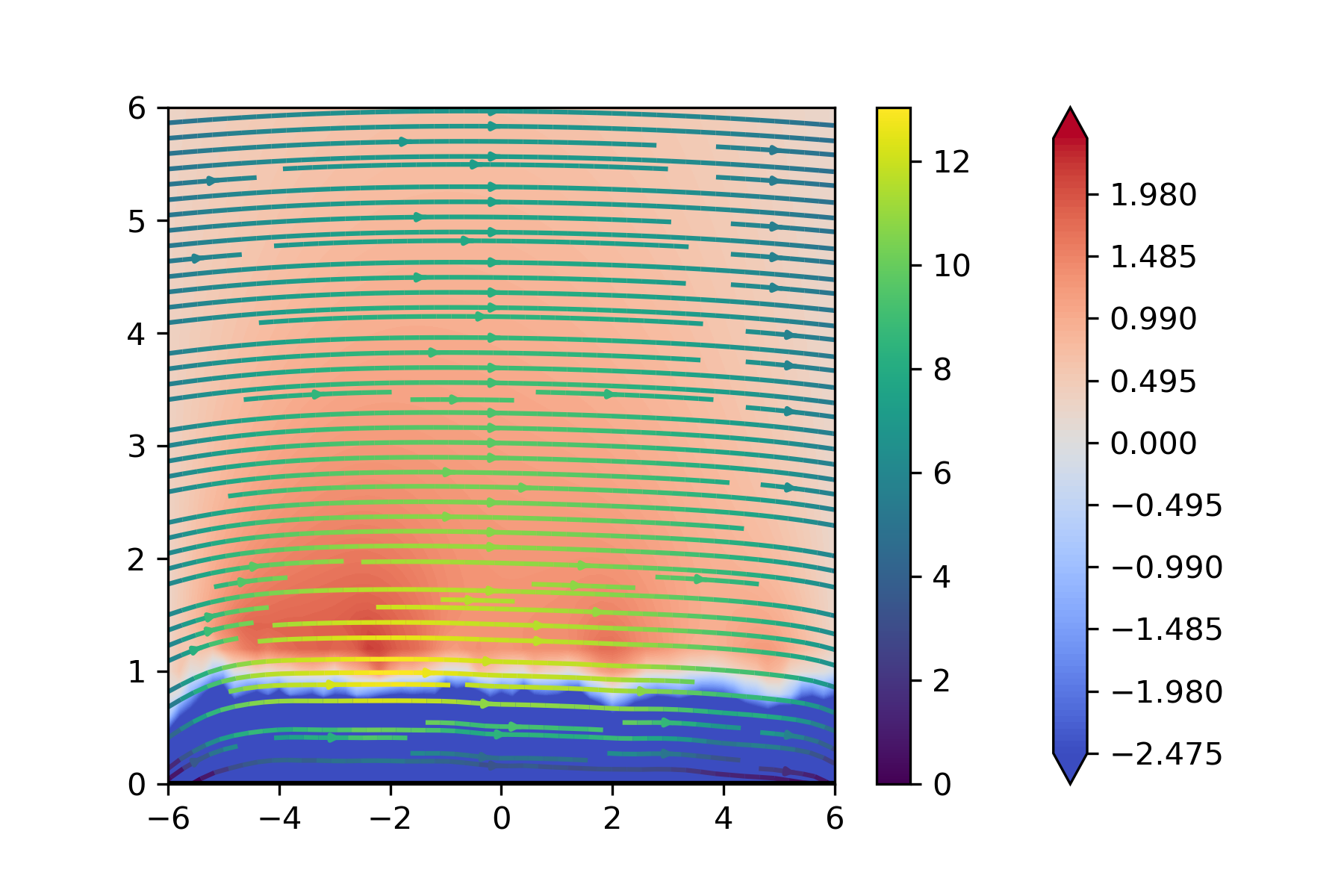} }}
    \subfloat[\centering $t=0.75$]{{\includegraphics[width=.5\linewidth]{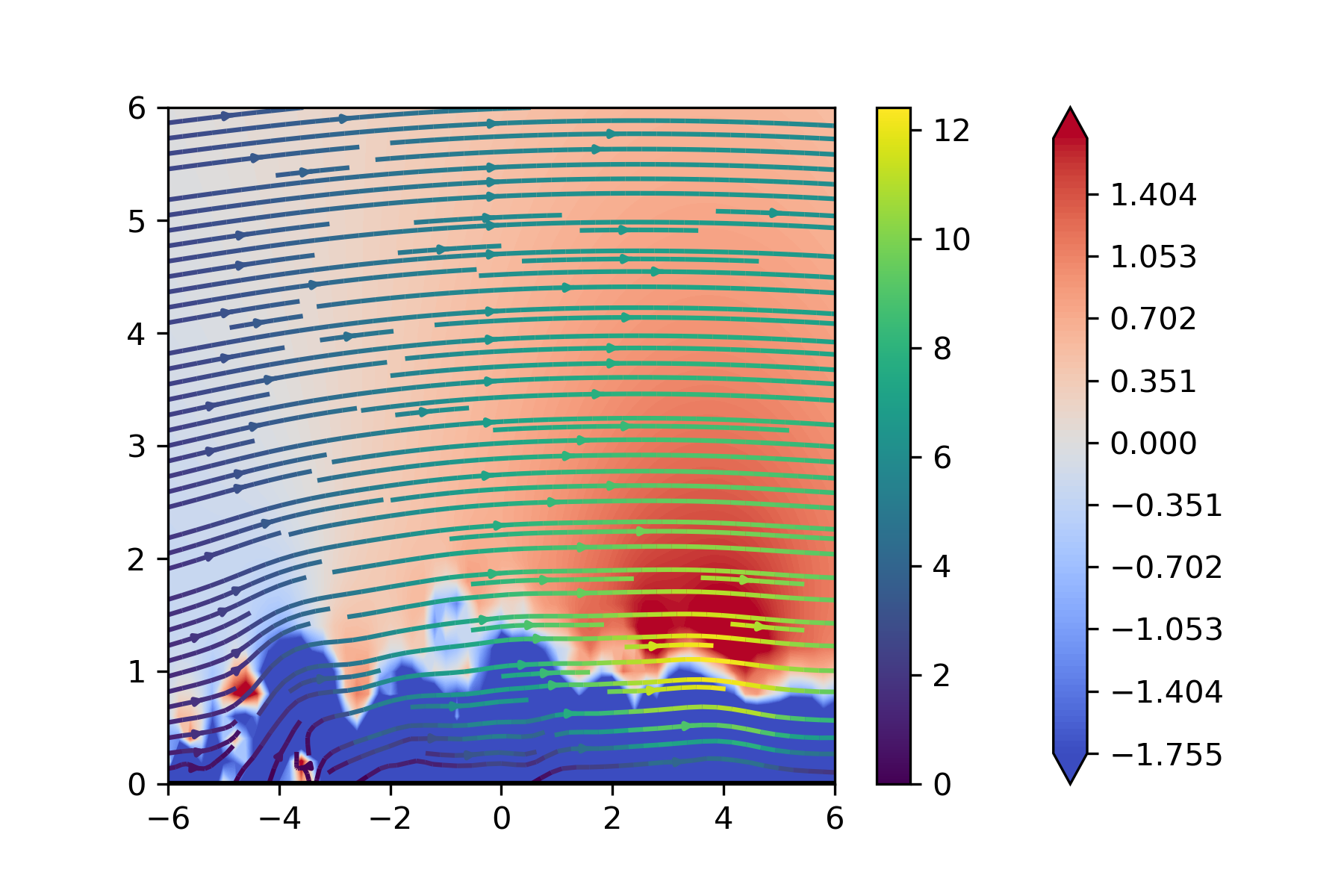}}}
    \qquad
    \subfloat[\centering $t=1.5$]{{\includegraphics[width=.5\linewidth]{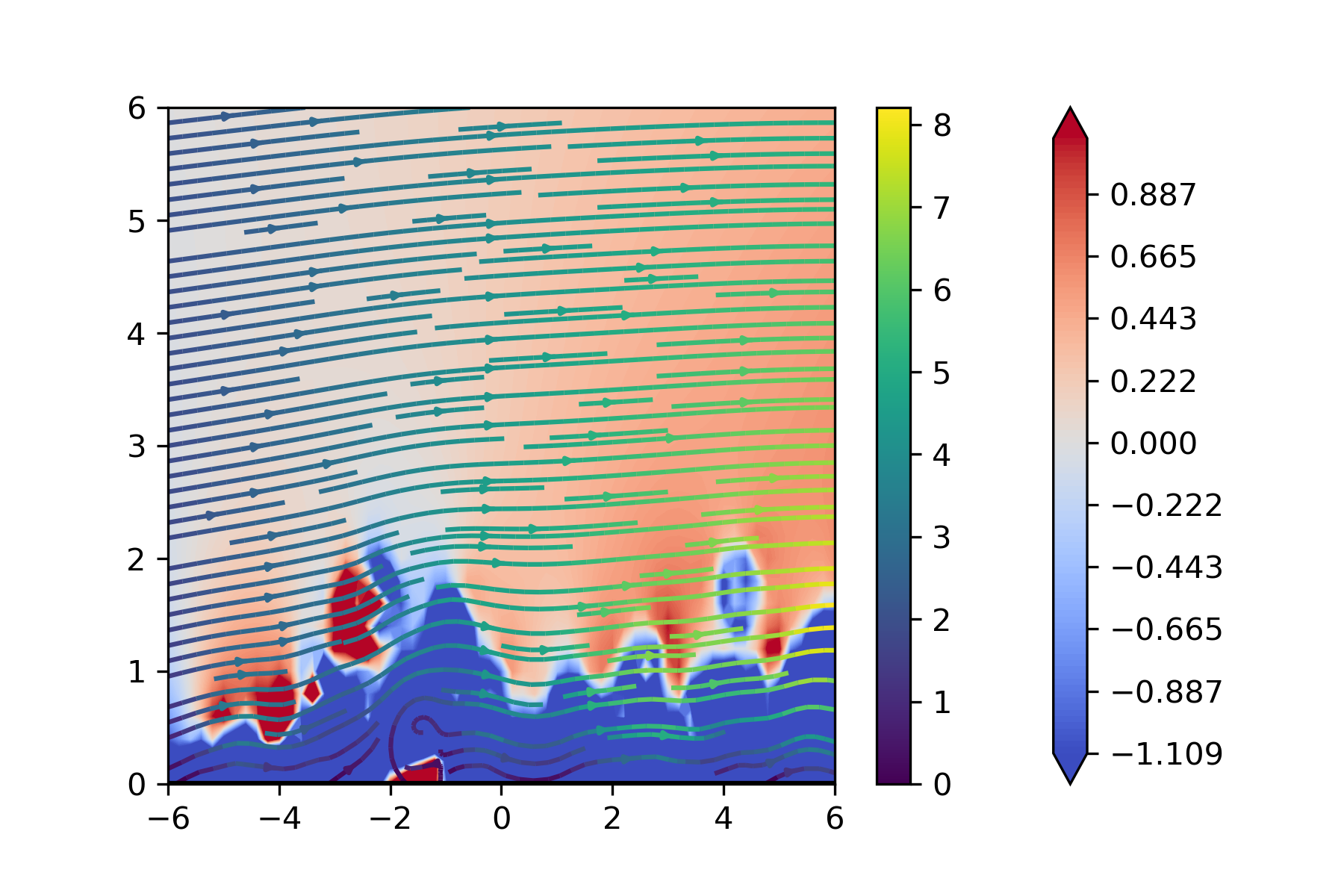} }}
    \subfloat[\centering $t=3.0$]{{\includegraphics[width=.5\linewidth]{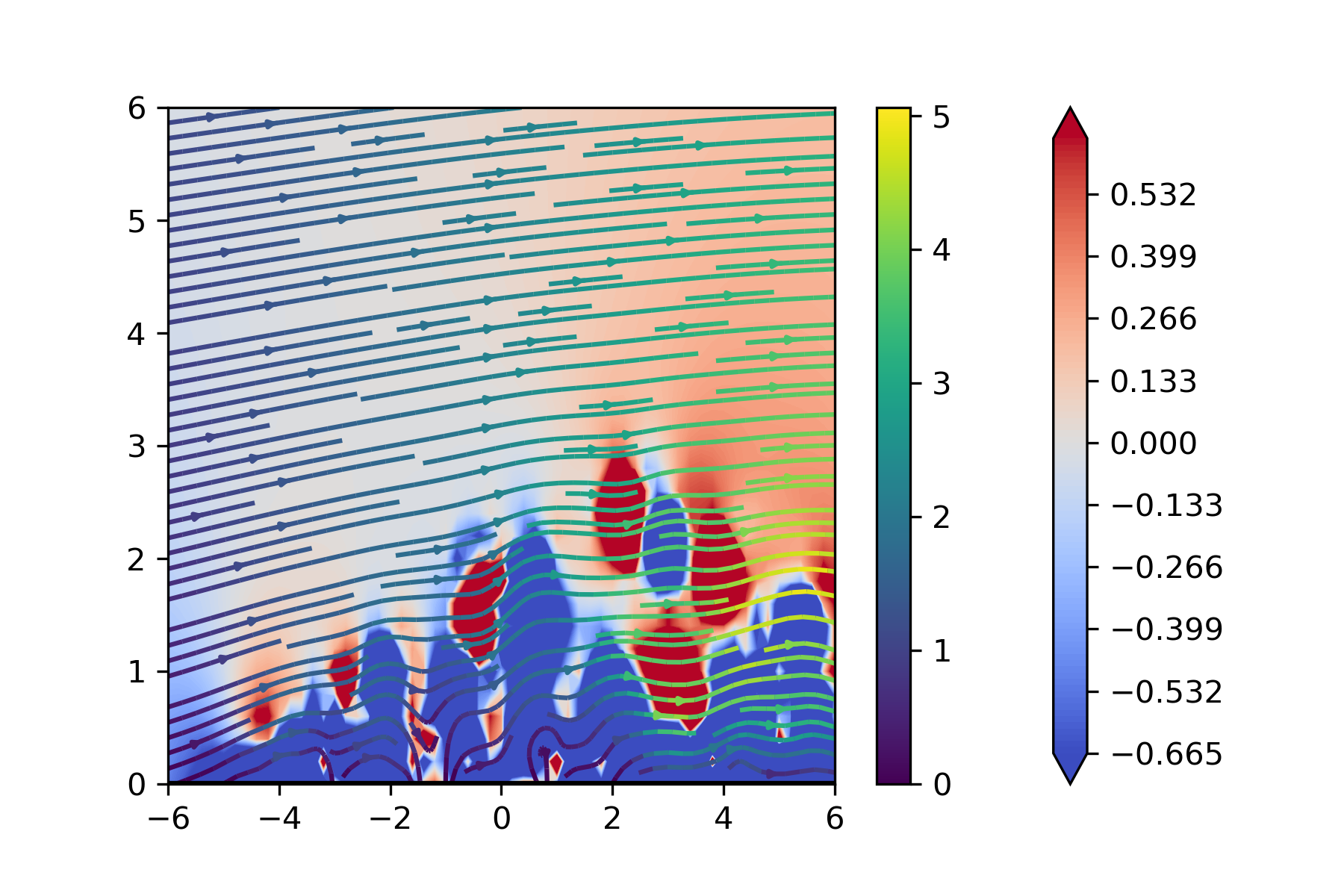}}}
    \caption{The outer layer flow at different times $t$.}
    \label{Exp2FigOFlow}
\end{figure}

\begin{figure}
    \centering
    \subfloat[\centering $t=0.15$]{{\includegraphics[width=.5\linewidth]{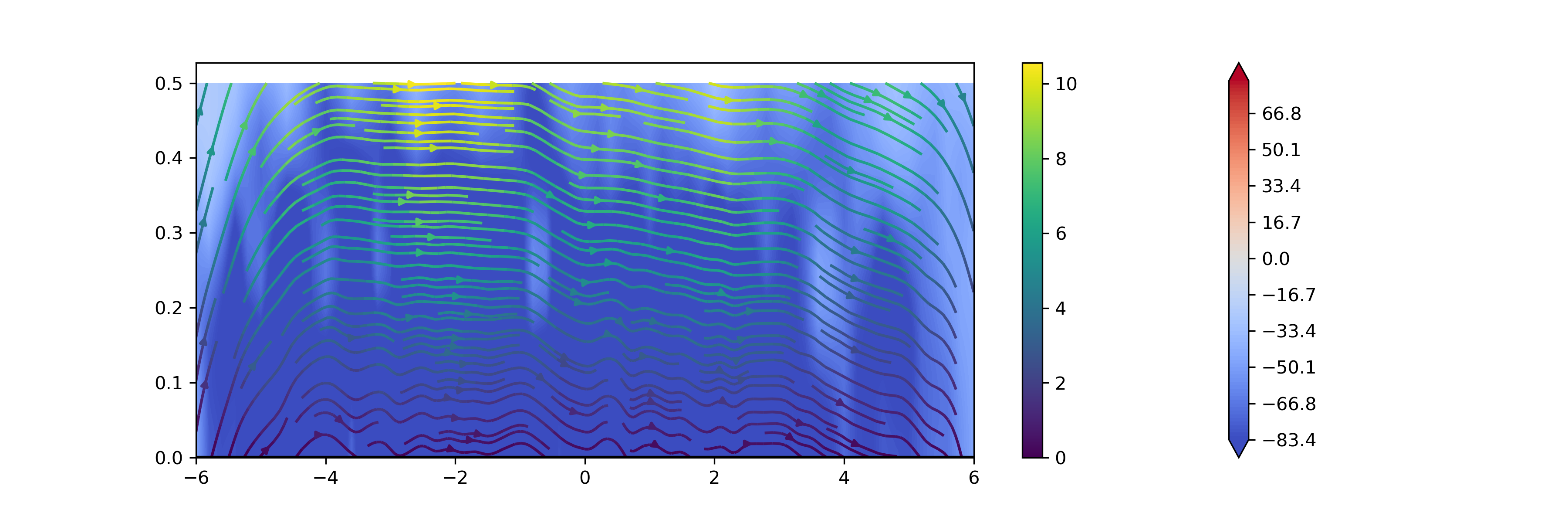} }}
    \subfloat[\centering $t=0.75$]{{\includegraphics[width=.5\linewidth]{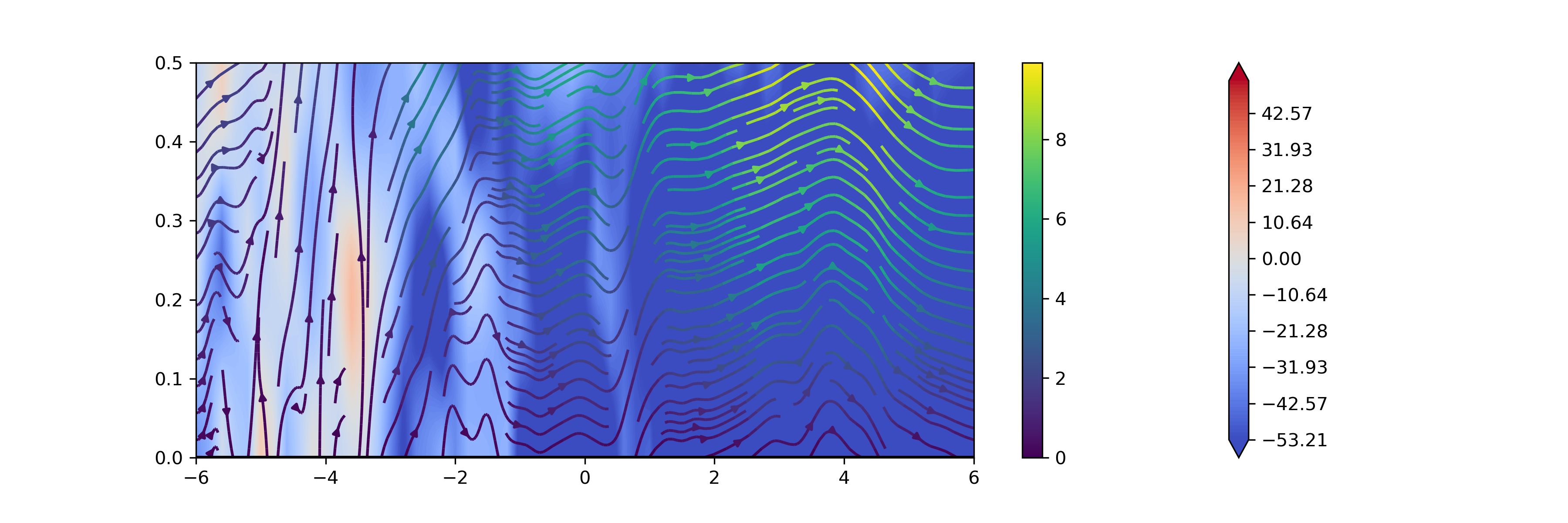}}}
    \qquad
    \subfloat[\centering $t=1.5$]{{\includegraphics[width=.5\linewidth]{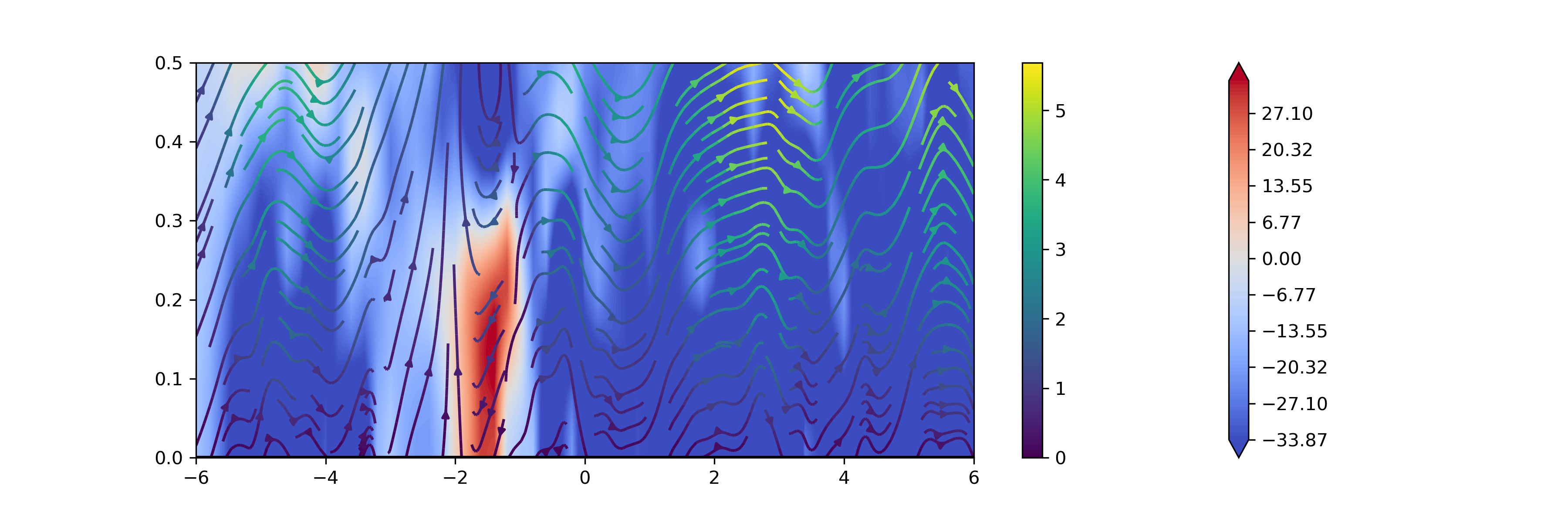} }}
    \subfloat[\centering $t=3.0$]{{\includegraphics[width=.5\linewidth]{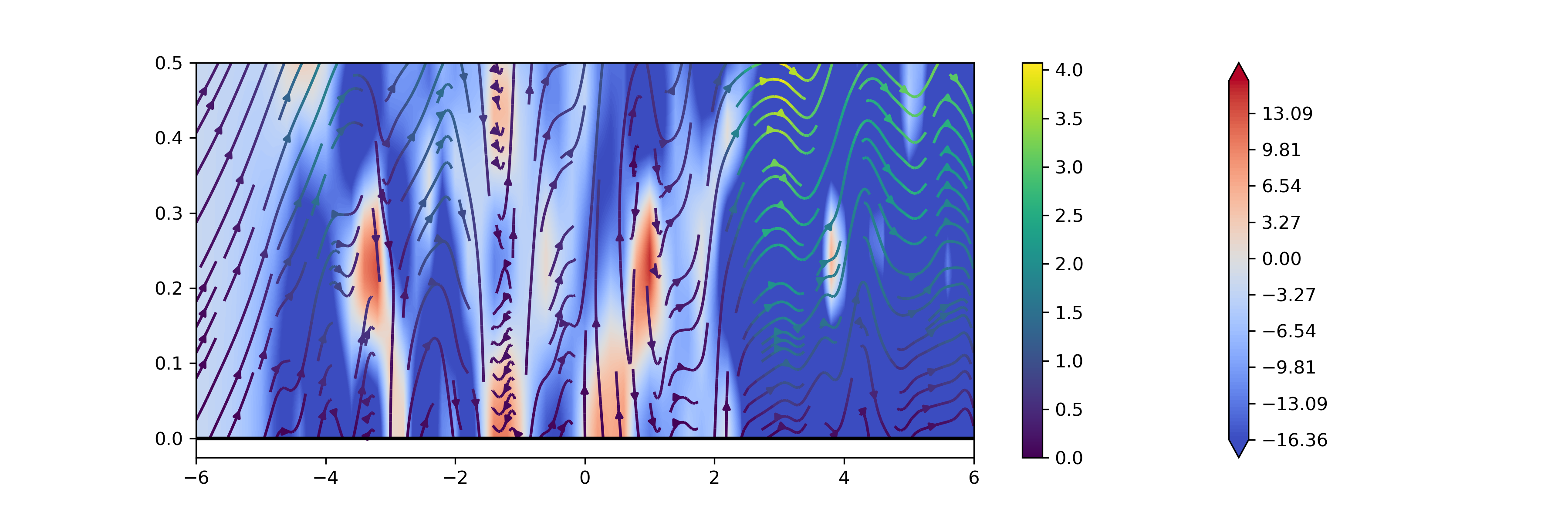}}}
    \caption{The boundary layer flow at different times $t$.}
    \label{Exp2FigBFlow}
\end{figure}

\section*{Data Availability Statement}

The data that support the findings of this study are available from
the corresponding author upon reasonable request.

\section*{Acknowledgement}

The authors would like to thank Oxford Suzhou Centre for Advanced
Research for providing the excellent computing facility. ZQ is supported
partially by the EPSRC Centre for Doctoral Training in Mathematics
of Random Systems: Analysis, Modelling and Simulation (EP/S023925/1). 

\newpage

\end{document}